\providecommand{\U}[1]{\protect\rule{.1in}{.1in}}
\newtheorem{theorem}{Theorem}[section]
\newtheorem{lemma}[theorem]{Lemma}
\newtheorem{corollary}[theorem]{Corollary}
\newtheorem{proposition}[theorem]{Proposition}
\newtheorem{remark}[theorem]{Remark}
\newtheorem{definition}[theorem]{Definition}
\def\<{\langle}
\def\>{\rangle}
\def\d{{\rm d}}
\def\L{\mathcal{L}}
\def\E{\mathbb{E}}
\def\H{{:\!\mathcal H\!:}}
\def\N{\mathbb{N}}
\def\P{\mathbb{P}}
\def\R{\mathbb{R}}
\def\T{\mathbb{T}}
\def\Z{\mathbb{Z}}
\def\eps{\varepsilon}
\begin{document}

\makeatletter
\renewcommand\theequation{\thesection.\arabic{equation}}
\@addtoreset{equation}{section} \makeatother

\title{Energy conditional measures and 2D turbulence}

\author{Franco Flandoli\footnote{Email: franco.flandoli@sns.it. Scuola Normale Superiore of Pisa, Italy.} \ and
Dejun Luo\footnote{Email: luodj@amss.ac.cn. RCSDS, Academy of Mathematics and Systems Science, Chinese Academy of Sciences, Beijing 100190, China, and School of Mathematical Sciences, University of the Chinese Academy of Sciences, Beijing 100049, China. }}

\maketitle

\begin{abstract}
We show that the invariant measure of point vortices, when conditioning the Hamiltonian to a finite interval, converges weakly to the enstrophy measure by conditioning the renormalized energy to the same interval. We also prove the existence of solutions to 2D Euler equations having the energy conditional measure as invariant measure. Some heuristic discussions and numerical simulations are presented in the last section.
\end{abstract}

\textbf{Keywords:} point vortices, Hamiltonian, white noise, renormalized energy, microcanonical ensemble


\section{Introduction}

For the dynamics of point vortices associated to the 2D Euler equations on the torus $\T^2 = \R^2/\Z^2$, the Lebesgue  measure is invariant. If we also randomize the intensities, a measure like
  \begin{equation}\label{invariant-meas}
  \lambda_N(\d x_1, \ldots, \d x_N, \d\xi_1,\ldots, \d\xi_N)= \d x_1 \ldots \d x_N \mathcal N(\d \xi_1) \ldots \mathcal N(\d \xi_N)
  \end{equation}
is invariant too, where $\mathcal N$ denotes the standard Gaussian distribution on $\R$. Define the Hamiltonian
  \begin{equation}\label{Hamiltonian}
  \mathcal H_N((\xi_1, x_1),\ldots, (\xi_N, x_N))= -\frac1{2N} \sum_{1\leq i\neq j\leq N} \xi_i \xi_j G(x_i-x_j),
  \end{equation}
where $G$ is the Green function on $\T^2$. The coefficient $-\frac12$ here is chosen so that $\mathcal H_N$ converge weakly to the renormalized energy of the white noise, see Proposition \ref{prop-weak} for details. We can rewrite $\mathcal H_N$ as a functional of the point vortices
  $$\omega_N= \frac1{\sqrt N} \sum_{i=1}^N \xi_i \delta_{x_i}.$$
Indeed, if we regard $G$ as a function on $\T^2\times \T^2$ by setting $G(x,y)= G(x-y)$ and $G(x,x)=0$ for all $x,y\in \T^2$, then
  \begin{equation}\label{Hamiltonian-1}
  \mathcal H_N= -\frac12 \<\omega_N\otimes \omega_N, G \>.
  \end{equation}
Let $\mu_N$ be the distribution of $\omega_N$ on $H^{-1-}(\T^2)$ under the measure $\lambda_N$, where $H^{-1-}(\T^2)$ is the intersection of all the Sobolev spaces $H^{s}(\T^2)$ of order less than $-1$. It is proved in \cite[Proposition 21]{F1} that $\mu_N$ is weakly convergent to the enstrophy measure $\mu$, which is supported by $H^{-1-}(\T^2)$.

Recall that $\mu$ is the distribution of the white noise on $\T^2$. Let $\H= \ :\!\!\mathcal H(\omega)\!\!:$ be the renormalized energy of a white noise $\omega$ (see Section 2.2 for its definition); we can regard $\H$ as a random variable defined on the probability space $\big(H^{-1-}(\T^2), \mathcal B\big(H^{-1-}(\T^2) \big), \mu\big)$, where $\mathcal B\big(H^{-1-}(\T^2) \big)$ is the collection of Borel measurable sets. Fix $a,b\in \R$ with $a<b$; from Proposition \ref{prop-support} below, we always have
  $$\mu(\{\H \in [a,b]\}) >0. $$
Therefore, the conditional measure
  \begin{equation}\label{energy-cond-meas}
  \mu^{a,b}(A) = \frac{\mu(A \cap \{\H \in [a,b]\})}{\mu(\{\H \in [a,b]\})}, \quad A\in \mathcal B\big(H^{-1-}(\T^2) \big)
  \end{equation}
is well defined. On the other hand, we shall prove in Proposition \ref{prop-weak} that $\lim_{N\to\infty} \mu_N(\{\mathcal H_N \in [a,b]\}) = \mu(\{\H \in [a,b]\}) >0$, hence we can define in the same way the energy conditional measures for the point vortices:
  \begin{equation}\label{microcanonical-meas}
  \mu_N^{a,b}(A) = \frac{\mu_N(A \cap \{\mathcal H_N \in [a,b]\})}{\mu_N(\{\mathcal H_N \in [a,b]\})}, \quad A\in \mathcal B\big(H^{-1-}(\T^2) \big).
  \end{equation}
The main result in the present paper is

\begin{theorem}\label{main-thm}
The family $\big\{\mu_N^{a,b} \big\}_{N\geq 1}$ of energy conditional measures converge weakly to $\mu^{a,b}$.
\end{theorem}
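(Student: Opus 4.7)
The strategy is to reduce weak convergence of the conditional laws $\mu_N^{a,b}$ to joint weak convergence of the pair $(\omega_N,\mathcal H_N(\omega_N))$ on $H^{-1-}(\T^2)\times\R$. For any $F\in C_b(H^{-1-}(\T^2))$ one writes
$$
\int F\,d\mu_N^{a,b}=\frac{\E_{\lambda_N}\big[F(\omega_N)\,{\bf 1}_{[a,b]}(\mathcal H_N)\big]}{\E_{\lambda_N}\big[{\bf 1}_{[a,b]}(\mathcal H_N)\big]}.
$$
Proposition \ref{prop-weak} handles the denominator (it converges to $\mu(\{\H\in[a,b]\})>0$), so everything comes down to the numerator.

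First I would establish joint weak convergence of $(\omega_N,\mathcal H_N(\omega_N))$ under $\lambda_N$ to $(\omega,\H(\omega))$ under $\mu$. The two marginal convergences (Proposition 21 of \cite{F1} and Proposition \ref{prop-weak}) make the joint family tight on $H^{-1-}(\T^2)\times\R$, so along any subsequence there is a limit $(\omega^\infty,Y)$ with the prescribed marginals, and it only remains to identify $Y=\H(\omega^\infty)$ almost surely. For this I would run a mollification argument: fix a smooth symmetric approximation $G^\epsilon$ of the Green function, set $\mathcal H_N^\epsilon:=-\frac12\<\omega_N\otimes\omega_N,G^\epsilon\>+c_N^\epsilon$ with the appropriate diagonal correction, and define analogously the Wick-renormalized $\H^\epsilon$ for the white noise. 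For each fixed $\epsilon>0$, $\omega\mapsto\H^\epsilon(\omega)$ extends to a continuous functional on $H^{-1-}(\T^2)$ because $G^\epsilon$ is smooth; hence by Skorokhod representation $\mathcal H_N^\epsilon(\omega_N)\to\H^\epsilon(\omega^\infty)$ along the convergent subsequence. The second-chaos variance estimates underlying Proposition \ref{prop-weak} should simultaneously yield $\sup_N\E_{\lambda_N}|\mathcal H_N-\mathcal H_N^\epsilon|^2\to0$ and $\E_\mu|\H-\H^\epsilon|^2\to0$ as $\epsilon\to0$; sending $\epsilon\to0$ then pins down $Y=\H(\omega^\infty)$, and by uniqueness of the limit the whole sequence converges jointly.

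With joint convergence secured, the last step is a Portmanteau-type argument. The integrand $(\omega,y)\mapsto F(\omega)\,{\bf 1}_{[a,b]}(y)$ is bounded and its set of discontinuities is contained in $H^{-1-}(\T^2)\times\{a,b\}$. Since $\H$ is a nondegenerate element of the second Wiener chaos of the white noise (it has infinitely many nontrivial Fourier modes), its law on $\R$ is absolutely continuous, so $\{a,b\}$ has zero mass under the law of $\H$ and the discontinuity set carries zero mass under the joint limit. Portmanteau then gives
$$
\E_{\lambda_N}\big[F(\omega_N){\bf 1}_{[a,b]}(\mathcal H_N)\big]\longrightarrow\int F(\omega){\bf 1}_{[a,b]}(\H(\omega))\,d\mu,
$$
and dividing by the limit of denominators yields $\int F\,d\mu_N^{a,b}\to\int F\,d\mu^{a,b}$, which is the claim.

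The main obstacle I expect is the identification $Y=\H(\omega^\infty)$ in the joint limit: because $\H$ is only defined via renormalization and is \emph{not} a continuous functional of $\omega$ on $H^{-1-}(\T^2)$, the approximation by $G^\epsilon$ has to be matched by a uniform-in-$N$ control of the second-chaos remainder $\mathcal H_N-\mathcal H_N^\epsilon$. Once that uniform control is in place (and it is precisely the kind of estimate that drives Proposition \ref{prop-weak}), the remaining ingredients -- tightness, non-atomicity of the law of $\H$, and the Portmanteau step -- are essentially routine.
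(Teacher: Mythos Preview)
Your proposal is correct and follows essentially the same route as the paper: Skorokhod coupling of $\omega_N\to\omega$, mollification of $G$ to a smooth $G^\epsilon$ (the paper's $G_n$), the uniform-in-$N$ second-moment identity $\E\<\omega_N\otimes\omega_N,f\>^2=2\int f^2$ for $f$ vanishing on the diagonal to control the remainder, and absolute continuity of the law of $\H$ to handle the indicator. The only packaging difference is that the paper bypasses the abstract ``tightness + identification of $Y$'' step by proving directly (Lemma~\ref{lem-3-1}) that $\<\tilde\omega_N\otimes\tilde\omega_N,G\>\to\<\tilde\omega\otimes\tilde\omega,G\>$ in $L^1$ on the Skorokhod space, which immediately gives the joint convergence you need and makes the diagonal correction $c_N^\epsilon$ unnecessary.
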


This result shows the convergence of a class of microcanonical measures. We mention that, when the intensities $\{\xi_i \}_{i\geq 1}$ are i.i.d. centered Bernoulli random variables, Benfatto et al. \cite{BPP} proved that the canonical Gibbs measures of the point vortices, with appropriately regularized Green functions, converge to the Gaussian measure $\mu_{\beta, \gamma}(\d\omega) = {\rm e}^{-\beta H -\gamma E}\,\d\omega$ ($\beta, \gamma >0$, $H$ and $ E$ are the energy and enstrophy functionals), which are invariant for the 2D Euler flow. In the recent work \cite{GrottoRomito}, analogous result was proved without smoothing the Green function; see \cite{GR} for related result concerning the generalised inviscid surface quasi-geostropic equations.

It is worth pointing out, since the interaction in the Euler dynamics is of long range, that the general principle of equivalence of ensembles is not necessarily valid, see \cite{TER} for more discussions on nonequivalence of ensembles. In the mean field regime of Onsager theory it holds true, see \cite{CLMP-1, Eyink}, but in the regime studied here the infinite particle limit is not the Gibbs measure associated to the renormalized energy. It seems that the relevant statistical ensemble is the microcanonical measure, see \cite[p.237]{BV} for detailed explanations.

The results presented here are meant to be fragments of a more general investigation on invariant measures of 2D Euler equations, in the attempt to capture some features of inverse cascade turbulence. Onsager theory, extremely relevant for the explanation of large scale coherent vortex structures, does not provide a description of inverse stationary turbulence; but unfortunately also the regime studied here is not the correct description. In a sense, Onsager theory and the regime considered here are two extremes, both with relevant features, but turbulence is somewhat in between. In the last section we discuss this issue.

The above theorem will be proved in Section 3. To this end, we first make some necessary preparations in Section 2, including the definitions of $\<\omega \otimes \omega, G\>$ and of the renormalized energy $\H= \ :\!\!\mathcal H(\omega)\!\!:$ for a white noise $\omega$; the relation between them will be clarified in Section 2.3. We study in Section 4 the limit behaviors of the correlation functions of the energy conditional measures on the ``flat space'' $(\R\times \T^2)^N$, following some arguments in \cite[Section 5.4]{Lions} (see also \cite{CLMP-1, CLMP-2, Neri} for related results). Based on the results in \cite{F1}, we prove in Section 5 the existence of solutions to the 2D Euler equations having the energy conditional measure $\mu^{a,b}$ as invariant measure. Finally, we present in the last section some heuristical discussions together with numerical simulations of the spectrum functions for point vortices, illustrating the relevance of our results with 2D turbulence.

\section{Preliminary results on the renormalized energy}

In this section we make some preparations regarding the renormalized energy of a white noise $\omega$ on $\T^2$. Firstly, we follow the idea in \cite[Section 2.4]{F1} to define the quantity $\<\omega\otimes \omega, G\>$, where $G$ is the Green function on $\T^2$. Secondly, we recall the definition of the renormalized energy $\H$ using the Galerkin approximation; based on the series expansion of $\H$, we are able to show that its distribution has full support on the real line. Finally, we study the relation between $\<\omega\otimes \omega, G\>$ and $\H$, see Theorem \ref{thm-coincidence}.

\subsection{Definition of $\<\omega\otimes \omega, G\>$ for a white noise $\omega$} \label{sec-2.1}

In this part we follow the approach in \cite[Section 2.4]{F1} (see also \cite[Section 2.2]{FS}) to define the quantity $\<\omega\otimes \omega, G\>$ when $\omega$ is a white noise on $\T^2$. The results below are proved in \cite[Corollary 6]{F1} and we omit the proofs here.

\begin{lemma}\label{lem-1}
\begin{itemize}
\item[\rm(i)] If $\omega: \Theta \to C^\infty(\T^2)'$ is a white noise and $f\in H^{2+}(\T^2\times \T^2)$, then for every $p\geq 1$ there is constant $C_p>0$ such that
  $$\E[|\<\omega\otimes \omega, f\>^p|] \leq C_p \|f\|_\infty^p.$$

\item[\rm(ii)] We have $\E \<\omega\otimes \omega, f\> = \int_{\T^2} f(x,x)\,\d x$.

\item[\rm(iii)] If $f$ is symmetric, then
  $$\E\big[ |\<\omega\otimes \omega, f\> - \E\<\omega\otimes \omega, f\>|^2\big]= 2 \int_{\T^2\times \T^2} f(x,y)^2\,\d x\d y.$$
\end{itemize}
\end{lemma}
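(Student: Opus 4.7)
The plan is to work with the Wiener chaos decomposition of $\<\omega \otimes \omega, f\>$. Fix a complete orthonormal basis $\{e_k\}_k$ of $L^2(\T^2)$ and write the white noise formally as $\omega = \sum_k \xi_k e_k$ with $(\xi_k)_k$ an i.i.d.\ sequence of standard Gaussians. Setting $f_{j,k} = \int_{\T^2\times\T^2} f(x,y)\, e_j(x) e_k(y)\,\d x\,\d y$ and using $\xi_k^2 = 1 + (\xi_k^2-1)$, split
  $$\<\omega\otimes\omega, f\> = \sum_k f_{k,k} \,+\, \sum_k (\xi_k^2 - 1) f_{k,k} \,+\, \sum_{j\neq k} \xi_j \xi_k\, f_{j,k}.$$
The last two series are centred second-chaos elements whose $L^2(\Theta)$-convergence follows from $\sum_{j,k} f_{j,k}^2 = \|f\|_{L^2}^2 < \infty$. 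The first, deterministic sum is handled by the embedding $H^{2+}(\T^2\times\T^2) \hookrightarrow C(\T^2\times\T^2)$: the diagonal trace $x\mapsto f(x,x)$ is then a well-defined continuous function, and Parseval applied along the diagonal gives $\sum_k f_{k,k} = \int_{\T^2} f(x,x)\,\d x$. This immediately yields (ii), since only the deterministic summand contributes to $\E\<\omega\otimes\omega,f\>$.

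For (iii), with $f$ symmetric so that $f_{j,k}=f_{k,j}$, I would exploit orthogonality of the chaos components together with $\E[(\xi_k^2-1)^2]=2$ and Wick's identity $\E[\xi_j\xi_k\xi_l\xi_m] = \delta_{jl}\delta_{km} + \delta_{jm}\delta_{kl}$ for $j\neq k$, $l\neq m$. A routine expansion then produces
  $$\E\bigl[|\<\omega\otimes\omega,f\> - \E\<\omega\otimes\omega,f\>|^2\bigr] = 2\sum_k f_{k,k}^2 + 2\sum_{j\neq k} f_{j,k}^2 = 2\,\|f\|_{L^2(\T^2\times\T^2)}^2,$$
after identifying $\sum_{j,k} f_{j,k}^2 = \|f\|_{L^2}^2$ via Parseval for the tensor basis $\{e_j\otimes e_k\}$.

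For (i), decompose $\<\omega\otimes\omega,f\> = \E\<\omega\otimes\omega,f\> + Y$ with $Y$ lying in the second Wiener chaos. The deterministic part is bounded by $\|f\|_\infty \cdot \Leb(\T^2) = \|f\|_\infty$ via (ii). For $Y$, Nelson's hypercontractivity in the second chaos yields $\|Y\|_{L^p(\Theta)} \leq (p-1)\|Y\|_{L^2(\Theta)}$ for every $p\geq 2$, and (iii), applied after replacing $f$ by the symmetrization $(f+f^T)/2$ (which does not increase $\|f\|_\infty$), gives $\|Y\|_{L^2}^2 \leq 2\|f\|_{L^2}^2 \leq 2\|f\|_\infty^2$. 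Combining the two bounds and invoking Jensen's inequality to extend from $p\geq 2$ down to $p\in[1,2)$ produces the estimate in (i).

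The principal delicacy is not any single computation but the rigorous construction of $\<\omega\otimes\omega,f\>$ itself: one must establish $L^2(\Theta)$-convergence of the truncated partial sums and verify basis-independence of the limiting object. This is precisely where the hypothesis $f\in H^{2+}$ is used, namely to make sense of the diagonal trace, whereas the second-chaos contribution requires only $f\in L^2$. Once this construction is in place, (ii) and (iii) reduce to direct moment computations and (i) is a soft consequence of hypercontractivity.
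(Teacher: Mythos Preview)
The paper does not actually prove this lemma; it merely cites \cite[Corollary~6]{F1} and omits the argument entirely. Your chaos-expansion approach is correct and standard---and indeed the authors later perform essentially the same Isserlis--Wick variance computation in Lemma~\ref{lem-appendix-2}, so your method is fully in line with the techniques the paper itself uses.
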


Based on these facts we can give a definition of $\<\omega\otimes \omega, G\>$ when $\omega$ is a white noise on $\T^2$.

\begin{proposition}\label{prop-1}
Let $\omega: \Theta \to C^\infty(\T^2)'$ be a white noise. Assume that $G_n\in H^{2+}(\T^2\times \T^2)$ are symmetric and approximate $G$ in the following sense:
  $$\lim_{n\to\infty} \int_{\T^2\times \T^2} (G_n-G)^2(x,y)\,\d x\d y =0, \quad \lim_{n\to\infty} \int_{\T^2} G_n(x,x)\,\d x =0.$$
Then the sequence of random variables $\<\omega\otimes \omega, G_n\>$ is a Cauchy sequence in mean square. We denote by $\<\omega\otimes \omega, G\>$ its limit.

Moreover, the limit is the same if $G_n$ is replaced by $\tilde G_n$ with the same properties and such that $\lim_{n\to\infty} \int_{\T^2\times \T^2} (G_n- \tilde G_n)^2(x,y)\,\d x\d y =0$.
\end{proposition}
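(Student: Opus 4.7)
The plan is to apply parts (ii) and (iii) of Lemma \ref{lem-1} directly to differences of the approximating kernels. First I would observe that for any $m, n$, the function $G_n - G_m$ lies in $H^{2+}(\T^2 \times \T^2)$ and is symmetric, so by (iii)
\begin{equation*}
\E\bigl[\bigl|\langle \omega\otimes\omega, G_n - G_m\rangle - \E\langle \omega\otimes\omega, G_n - G_m\rangle\bigr|^2\bigr] = 2\int_{\T^2\times\T^2}(G_n-G_m)^2(x,y)\,\d x\,\d y,
\end{equation*}
while (ii) gives $\E\langle \omega\otimes\omega, G_n-G_m\rangle = \int_{\T^2}[G_n(x,x)-G_m(x,x)]\,\d x$. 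Combining these,
\begin{equation*}
\E\bigl[\langle \omega\otimes\omega, G_n - G_m\rangle^2\bigr] = 2\int_{\T^2\times\T^2}(G_n-G_m)^2\,\d x\,\d y + \Bigl(\int_{\T^2} [G_n(x,x)-G_m(x,x)]\,\d x\Bigr)^2.
\end{equation*}

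Next I would show that both contributions vanish as $m,n\to\infty$. The triangle inequality in $L^2(\T^2\times\T^2)$ yields $\|G_n-G_m\|_{L^2} \leq \|G_n-G\|_{L^2} + \|G-G_m\|_{L^2} \to 0$, noting that $G$ itself belongs to $L^2(\T^2\times\T^2)$ because its only singularity on the diagonal is logarithmic. The second term vanishes since each of $\int G_n(x,x)\,\d x$ and $\int G_m(x,x)\,\d x$ tends to $0$ by hypothesis. Hence $\{\langle \omega\otimes\omega, G_n\rangle\}_n$ is Cauchy in $L^2(\Theta)$, and we define $\langle \omega\otimes\omega, G\rangle$ to be the limit.

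For the independence of the approximating sequence, I would apply the same identity to $G_n - \tilde G_n$, which is again symmetric and in $H^{2+}(\T^2\times\T^2)$. The hypothesis $\int (G_n-\tilde G_n)^2\,\d x\,\d y \to 0$ kills the variance term, while the mean equals $\int G_n(x,x)\,\d x - \int \tilde G_n(x,x)\,\d x \to 0$ by the assumed diagonal conditions on both sequences; hence $\langle \omega\otimes\omega, G_n\rangle - \langle \omega\otimes\omega, \tilde G_n\rangle \to 0$ in $L^2$, so the two limits coincide.

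There is no substantial obstacle: the argument is a mechanical application of Lemma \ref{lem-1}, with symmetry of the difference inherited from the symmetry of each term. The only conceptual point worth flagging is that $\int_{\T^2} G(x,x)\,\d x$ is not classically defined because of the logarithmic singularity of $G$ on the diagonal; this is precisely why the renormalization condition $\int G_n(x,x)\,\d x\to 0$ is imposed as part of the hypothesis, rather than being derivable from an $L^2$ or pointwise statement about the diagonal values.
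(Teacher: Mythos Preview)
Your proof is correct and follows essentially the same approach as the paper: both arguments apply Lemma~\ref{lem-1}(ii)--(iii) to the difference $G_n-G_m$ and use the two hypotheses to conclude that the variance and the mean separately vanish. The only cosmetic difference is that the paper first centers the sequence by subtracting $\int_{\T^2}G_n(x,x)\,\d x$ and shows the centered sequence is Cauchy, whereas you keep the sequence uncentered and split $\E[X^2]=\mathrm{Var}(X)+(\E X)^2$; these are equivalent.
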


\begin{proof}
The proofs are the same as those of \cite[Theorem 8]{F1}; we recall them here for completeness. Since $\lim_{n\to\infty} \int_{\T^2} G_n(x,x)\,\d x =0$, it is equivalent to show that $\<\omega\otimes \omega, G_n\>- \int_{\T^2} G_n(x,x)\,\d x$ is a Cauchy sequence in mean square. We have
  $$\aligned &\ \E\bigg[\Big|\<\omega\otimes \omega, G_n\>- \int_{\T^2} G_n(x,x)\,\d x- \<\omega\otimes \omega, G_m\>+ \int_{\T^2} G_m(x,x)\,\d x\Big|^2 \bigg]\\
  =&\ \E\bigg[\Big|\<\omega\otimes \omega, G_n-G_m\>- \int_{\T^2} (G_n-G_m)(x,x)\,\d x \Big|^2 \bigg]\\
  =&\ 2 \int_{\T^2\times \T^2} (G_n-G_m)^2(x,y)\,\d x\d y,
  \endaligned$$
where the last equality follows from (ii) and (iii) of Lemma \ref{lem-1}. This implies the Cauchy property, and thus $\<\omega\otimes \omega, G\>$ is well defined. The invariance property is proved similarly.
\end{proof}

Here is an example of the approximating functions $G_n$. Let $\chi: \T^2=[-1/2, 1/2]^2 \to [0,1]$ be a smooth and symmetric function with support in a small ball $B(0,r)$, and equal to 1 in $B(0,r/2)$. For any $n\geq 1$, set $\chi_n(x) = \chi(nx),\, x\in \T^2$. Define
  $$G_n(x) = \begin{cases}
  G(x)(1-\chi_n(x)), & x\neq 0;\\
  0, & x=0.
  \end{cases}$$
We regard $G_n$ as a function on $\T^2\times \T^2$ by setting $G_n(x,y) = G_n(x-y)$. Since $G_n(x,x) \equiv 0$, we have the following estimate (cf. Lemma \ref{lem-1}(iii)):
  \begin{equation}\label{eq-1}
  \E \big[(\<\omega \otimes \omega, G_n\>- \<\omega \otimes \omega, G\>)^2\big] \leq 2 \int_{\T^2\times \T^2} (G_n-G)^2(x,y)\,\d x\d y.
  \end{equation}

\subsection{Definition of the renormalized energy $\H$}

In this subsection we recall the definition of the renormalized energy $\H$ via the Galerkin approximation. To this end, let $\{e_k\}_{k\in \Z_0^2}$ be defined as
  \begin{equation}\label{real-basis}
  e_k(x) = \sqrt{2} \begin{cases}
  \cos(2\pi k\cdot x), & k\in \Z^2_+ , \\
  \sin(2\pi k\cdot x), & k\in \Z^2_-,
  \end{cases}
  \end{equation}
where $\Z_0^2= \Z^2 \setminus \{0\}$ and $\Z^2_+ = \big\{k\in \Z^2_0: (k_1 >0) \mbox{ or } (k_1=0,\, k_2>0) \big\}$ and $\Z^2_- = -\Z^2_+$. This family of functions is an orthonormal basis of square integrable functions on $\T^2$ with vanishing mean. Let $\omega$ be a white noise on $\T^2$, then the random series
  $$\omega = \sum_{k\in\Z_0^2} \<\omega, e_k\> e_k$$
converge in mean square in $H^{-1-\delta}(\T^2)$ for any $\delta>0$. For $N\geq 1$, define $\Lambda_N = \{k\in \Z_0^2: |k|\leq N\}$ and
  $$\bar\omega_N = \sum_{k\in \Lambda_N} \<\omega, e_k\> e_k, \quad u_N = K\ast \bar\omega_N,$$
where $K$ is the Biot--Savart kernel:
  $$K(x)= \nabla^\perp G(x) = -\frac{{\rm i}}{2\pi} \sum_{k\in \Z_0^2}  \frac{k^\perp}{|k|^2} {\rm e}^{2\pi {\rm i} k\cdot x},$$
with $\nabla^\perp = (\partial_2, -\partial_1)$ and $k^\perp = (k_2, -k_1)$. Set
  $$\mathcal E_N = \frac12 \int_{\T^2} |u_N(x)|^2\,\d x, \quad \tilde{\mathcal E}_N = \mathcal E_N - \E\mathcal E_N.$$
The following result is well known (see e.g. \cite[p. 593]{ARH} or \cite[Proposition 2.5]{AF2}).

\begin{proposition}\label{prop-2}
The sequence $\big\{\tilde{\mathcal E}_N \big\}_{N\geq 1}$ is Cauchy in $L^2(\Theta, \P)$. Denote its limit by $\H$ and call it the \emph{renormalized energy}; one has
  $$\H= \frac{1}{8\pi^2} \sum_{k\in \Z_0^2} \frac1{|k|^2}(\<\omega, e_k\>^2 -1)$$
and
  $$\E\big( |\H|^2 \big) = \frac{1}{32\pi^4} \sum_{k\in \Z_0^2} \frac1{|k|^4}.$$
\end{proposition}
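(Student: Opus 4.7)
The plan is to first reduce $\mathcal E_N$ to an explicit quadratic form in the i.i.d.\ Gaussian coefficients $\langle \omega, e_k\rangle$, and then derive the Cauchy property and the $L^2$ identity from the convergence of the series $\sum_k |k|^{-4}$.

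First I would compute $\mathcal E_N$ in a diagonalized form. Since $\bar\omega_N$ has vanishing mean and $-\Delta G = \delta_0 - 1$ on functions with zero mean, integration by parts gives
\begin{equation*}
\|u_N\|_{L^2}^2 = \|\nabla (G\ast \bar\omega_N)\|_{L^2}^2 = -\int_{\T^2}(G\ast\bar\omega_N)\,\Delta(G\ast\bar\omega_N)\,\d x = \langle G, \bar\omega_N\otimes\bar\omega_N\rangle.
\end{equation*}
Next I would use the Mercer-type expansion $G(x,y) = \sum_{j\in \Z_0^2}\frac{1}{4\pi^2|j|^2} e_j(x) e_j(y)$ (easily verified from $-\Delta e_k = 4\pi^2|k|^2 e_k$ and the symmetry of the basis \eqref{real-basis}), combined with the orthonormality of $\{e_k\}$, to obtain the diagonal formula
\begin{equation*}
\mathcal E_N \,=\, \frac{1}{2}\|u_N\|_{L^2}^2 \,=\, \frac{1}{8\pi^2}\sum_{k\in \Lambda_N}\frac{\langle \omega, e_k\rangle^2}{|k|^2}.
\end{equation*}

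Since $\omega$ is a white noise and $\{e_k\}$ is orthonormal in $L^2(\T^2)$, the family $\{\langle \omega, e_k\rangle\}_{k\in \Z_0^2}$ consists of i.i.d.\ standard Gaussians. Hence $\E\langle \omega, e_k\rangle^2 = 1$ and
\begin{equation*}
\tilde{\mathcal E}_N \,=\, \frac{1}{8\pi^2}\sum_{k\in \Lambda_N}\frac{\langle \omega, e_k\rangle^2 - 1}{|k|^2}.
\end{equation*}
The summands are independent and centered, with $\mathrm{Var}(\langle \omega, e_k\rangle^2 - 1) = 2$ (from $\E X^4 = 3$ for $X\sim \mathcal N(0,1)$).

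For $N<M$, orthogonality therefore yields
\begin{equation*}
\E\bigl[(\tilde{\mathcal E}_M - \tilde{\mathcal E}_N)^2\bigr] \,=\, \frac{2}{64\pi^4}\sum_{k\in \Lambda_M\setminus \Lambda_N}\frac{1}{|k|^4} \,=\, \frac{1}{32\pi^4}\sum_{k\in \Lambda_M\setminus \Lambda_N}\frac{1}{|k|^4},
\end{equation*}
which tends to zero because $\sum_{k\in \Z_0^2}|k|^{-4}<\infty$. This proves the Cauchy property; the series representation of $\H$ follows by passing to the $L^2$ limit in the explicit formula for $\tilde{\mathcal E}_N$, and the identity $\E|\H|^2 = \frac{1}{32\pi^4}\sum_{k}|k|^{-4}$ is obtained by letting $N\to\infty$ in the variance formula. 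There is no genuine obstacle here: the only point requiring care is the diagonalization step, which hinges on recognizing that the kernel $G$ is diagonalized by the same basis that carries the spectral decomposition of the white noise, so the expansion collapses to a single sum over $k$.
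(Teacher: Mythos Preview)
Your argument is correct and follows essentially the same route as the paper: both proofs reduce $\mathcal E_N$ to the diagonal form $\frac{1}{8\pi^2}\sum_{k\in\Lambda_N}|k|^{-2}\langle\omega,e_k\rangle^2$ and then exploit the i.i.d.\ Gaussian structure and the summability of $|k|^{-4}$; the only cosmetic difference is that the paper obtains the diagonalization by computing $K\ast e_k = -\frac{1}{2\pi}\frac{k^\perp}{|k|^2}e_{-k}$ directly, whereas you use integration by parts together with the spectral expansion of $G$. One small slip to fix: with the paper's convention $G(x)=-\frac{1}{4\pi^2}\sum_{k}|k|^{-2}\tilde e_k(x)$ one has $\Delta G=\delta_0-1$ (not $-\Delta G=\delta_0-1$), so both your integration-by-parts identity and your Mercer expansion carry the wrong sign individually---the two errors cancel and your final formula for $\mathcal E_N$ is correct, but the intermediate expressions should read $\|u_N\|_{L^2}^2 = -\langle G,\bar\omega_N\otimes\bar\omega_N\rangle$ and $G(x,y)=-\frac{1}{4\pi^2}\sum_j|j|^{-2}e_j(x)e_j(y)$.
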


\begin{proof}
We give the detailed computations since we need the exact coefficients. Note that
  $$K(x) =  \frac1{2\pi} \sum_{l\in \Z_0^2}  \frac{l^\perp}{|l|^2} \sin(2\pi l\cdot x).$$
Therefore, if $k\in \Z^2_+$, then
  $$\aligned (K\ast e_k)(x)&= \frac1{2\pi} \sum_{l\in \Z_0^2}  \frac{l^\perp}{|l|^2} \int_{\T^2} \sin(2\pi l\cdot (x-y)) \sqrt{2} \cos(2\pi k\cdot y)\,\d y\\
  &= \frac{\sqrt{2}}{2\pi} \frac{k^\perp}{|k|^2}\sin(2\pi k\cdot x) = - \frac{1}{2\pi} \frac{k^\perp}{|k|^2} e_{-k}(x),
  \endaligned$$
where the second equality is due to the fact that the integral vanishes unless $l= \pm k$. Similarly, if $k\in \Z^2_-$, then
  $$(K\ast e_k)(x) = \frac1{2\pi} \sum_{l\in \Z_0^2}  \frac{l^\perp}{|l|^2} \int_{\T^2} \sin(2\pi l\cdot (x-y)) \sqrt{2} \sin(2\pi k\cdot y)\,\d y = - \frac{1}{2\pi} \frac{k^\perp}{|k|^2} e_{-k}(x).$$
Thus,
  $$u_N = K\ast \bar\omega_N = - \frac{1}{2\pi} \sum_{k\in \Lambda_N} \frac{k^\perp}{|k|^2} \<\omega, e_k\> e_{-k}.$$
As a result,
  $$\mathcal E_N = \frac12 \int_{\T^2} |u_N(x)|^2\,\d x = \frac1{8\pi^2} \sum_{k,l\in \Lambda_N} \frac{k\cdot l}{|k|^2 |l|^2} \<\omega, e_k\> \<\omega, e_l\> \<e_{-k}, e_{-l}\> = \frac1{8\pi^2} \sum_{k\in \Lambda_N} \frac{1}{|k|^2} \<\omega, e_k\>^2. $$
Consequently,
  \begin{equation}\label{prop-2.1}
  \tilde{\mathcal E}_N = \frac1{8\pi^2} \sum_{k\in \Lambda_N} \frac{1}{|k|^2} (\<\omega, e_k\>^2 -1).
  \end{equation}

Next,
  $$\aligned \E \big[(\tilde{\mathcal E}_N)^2 \big] &= \frac1{64\pi^4} \sum_{k,l\in \Lambda_N} \frac{1}{|k|^2 |l|^2} \E \big[(\<\omega, e_k\>^2 -1) (\<\omega, e_l\>^2 -1) \big] \\
  &= \frac1{64\pi^4} \sum_{k\in \Lambda_N} \frac{1}{|k|^4} \E \big[\<\omega, e_k\>^4 -1 \big] = \frac1{32\pi^4} \sum_{k\in \Lambda_N} \frac{1}{|k|^4},
  \endaligned$$
since $\E\<\omega, e_k\>^4 =3$. The same calculations imply that $\big\{ \tilde{\mathcal E}_N \big\}_{N\geq 1}$ is a Cauchy sequence in $L^2(\Theta, \P)$ and the two desired equalities.
\end{proof}

As an application of the expression for the renormalized energy, we can prove

\begin{proposition}\label{prop-support}
The distribution of $\H$ is supported on the whole real line.
\end{proposition}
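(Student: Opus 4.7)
The plan is to exploit the explicit series representation from Proposition \ref{prop-2}:
\[
\H = \frac{1}{8\pi^2} \sum_{k\in \Z_0^2} \frac{1}{|k|^2} (g_k^2-1), \qquad g_k := \<\omega, e_k\>,
\]
where $(g_k)_{k\in \Z_0^2}$ is an i.i.d.\ sequence of standard Gaussians. I would show that for every $t_0\in\R$ and every $\eps>0$, $\P(|\H - t_0|<\eps)>0$, by a truncation-plus-tail argument. Split
\[
\H = S_N + R_N, \qquad S_N := \frac{1}{8\pi^2}\sum_{|k|\le N}\frac{g_k^2-1}{|k|^2}, \quad R_N := \frac{1}{8\pi^2}\sum_{|k|>N}\frac{g_k^2-1}{|k|^2}.
\]
Since $S_N$ and $R_N$ depend on disjoint families of Gaussians, they are independent; moreover, the computation in the proof of Proposition \ref{prop-2} gives $\E[R_N^2] = \frac{1}{32\pi^4}\sum_{|k|>N}|k|^{-4}\to 0$, so by Chebyshev $\P(|R_N|<\eps/2)\ge 1/2$ for $N$ large.

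The heart of the argument is to show $\P(|S_N-t_0|<\eps/2)>0$ for suitably large $N$. Write $S_N = T_N - a_N$, where $T_N := \frac{1}{8\pi^2}\sum_{|k|\le N}\frac{g_k^2}{|k|^2}$ is a positive linear combination of independent $\chi^2_1$ variables, and $a_N := \frac{1}{8\pi^2}\sum_{|k|\le N}|k|^{-2}$. A crucial observation is that $\sum_{k\in\Z_0^2}|k|^{-2}=+\infty$ (logarithmic divergence in dimension two), so $a_N\to +\infty$. Thus I can choose $N$ large enough that $t_0 + a_N > \eps/2$, placing the target $t_0+a_N$ in the open range $(0,\infty)$ of $T_N$. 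The density of $T_N$ is obtained by convolving the densities of the independent rescaled $\chi^2_1$ summands; a standard induction on the number of summands shows this density is continuous and strictly positive on $(0,\infty)$ (for example, from the explicit formula for the density of a weighted sum of $\chi^2_1$'s, or from the fact that the convolution of two densities positive on $(0,\infty)$ is again positive on $(0,\infty)$). Consequently $\P(|S_N-t_0|<\eps/2) = \P(|T_N-(t_0+a_N)|<\eps/2)>0$, and combining this with the tail estimate and independence yields
\[
\P(|\H - t_0|<\eps)\ge \P(|S_N-t_0|<\eps/2)\, \P(|R_N|<\eps/2) > 0.
\]

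The only nontrivial step is the strict positivity of the density of $T_N$ on its range; everything else is bookkeeping. This is a standard but not entirely trivial fact about positive quadratic forms in independent Gaussians, so it is the one point I would be careful to justify (either by the convolution argument above, or by an appeal to Malliavin calculus / Wiener-chaos support results for positive elements of the second Wiener chaos). The divergence $a_N \to \infty$, which is what makes the whole real line accessible rather than just a half-line, is an essential structural feature of the two-dimensional case and deserves to be highlighted.
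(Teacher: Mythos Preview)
Your proof is correct and follows essentially the same route as the paper: the same decomposition $\H = S_N + R_N$ into independent pieces, the same Chebyshev control of the tail, and the same reliance on the divergence $a_N = \frac{1}{8\pi^2}\sum_{|k|\le N}|k|^{-2}\to\infty$ to reach arbitrarily negative targets. The only difference is in how positivity for the finite-dimensional part $S_N$ is obtained: you invoke strict positivity of the density of $T_N$ on $(0,\infty)$ via convolution of scaled $\chi^2_1$ densities, whereas the paper argues more elementarily by exhibiting an explicit product event $\bigcap_{|k|\le N}\{g_k^2\in[1+c_1,1+c_2]\}$ that forces $S_N$ into the target interval (splitting into three cases according to the sign of the target). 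The paper's version avoids any density statement but requires a small case analysis; your version is cleaner once the convolution fact is granted. Both are valid and the structural idea is identical.
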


\begin{proof}
For any $a,b\in \R,\, a<b$, it suffices to show that $Z^{a,b}:= \P(\{\H\in [a,b]\})>0$. Without loss of generality, assume $b-a\leq 1$.

We define $\delta_0:= (b-a)/5$ and the remainder
  $$\mathcal R_N:= \frac1{8\pi^2} \sum_{|k|>N} \frac{1}{|k|^2} (\<\omega, e_k\>^2 -1).$$
Then $\H= \tilde{\mathcal E}_N + \mathcal R_N$ and, for all $N\geq 1$, the two r.v.'s $\tilde{\mathcal E}_N$ and $\mathcal R_N$ are independent of one another. Moreover,
  $$\{\H \in [a,b]\} \supset \big\{\tilde{\mathcal E}_N \in [a+\delta_0, b-\delta_0] \big\} \cap \{|\mathcal R_N| \leq \delta_0\},$$
therefore,
  $$\P\big(\{\H \in [a,b]\} \big) \geq \P\big( \big\{\tilde{\mathcal E}_N \in [a+\delta_0, b-\delta_0] \big\} \big)\, \P\big(\{|\mathcal R_N| \leq \delta_0\} \big).$$
Since $\mathcal R_N$ tends to 0 in the norm $L^2(\Theta,\P)$ as $N\to \infty$, we can find $N_0\in \Z_+$ such that $\P\big(\{|\mathcal R_N| \leq \delta_0\} \big) \geq 1/2$ for all $N\geq N_0$. Thus, it is enough to show that
  \begin{equation}\label{prop-support.1}
  \P\big( \big\{\tilde{\mathcal E}_{N_0} \in [a+\delta_0, b-\delta_0] \big\} \big)>0.
  \end{equation}

We define
  $$L= \frac1{8\pi^2} \sum_{k\in \Lambda_{N_0}} \frac{1}{|k|^2}$$
and consider three different cases according to the location of the origin 0 w.r.t. the middle subinterval $[a+2\delta_0, a+3\delta_0]$.

(i) $a+2\delta_0 >0$. Since $[a+3\delta_0, a+4\delta_0] \subset [a+\delta_0, b-\delta_0]$, it is sufficient to prove that
  \begin{equation}\label{prop-support.2}
  \P\big( \big\{\tilde{\mathcal E}_{N_0} \in [a+3\delta_0, a+4\delta_0] \big\} \big)>0.
  \end{equation}
Set $c_1:= (a+3\delta_0)/L$ and $c_2:= (a+4\delta_0)/L$ which are positive constants. Recall that $\{\<\omega,e_k\> \}_{k\in \Z_0^2}$ is a family of i.i.d. standard Gaussian r.v.'s; we have
  $$p_1 := \P\big( \big\{\<\omega,e_k\> ^2\in [1+c_1, 1+c_2] \big\} \big)>0.$$
The desired property \eqref{prop-support.2} follows from the next inclusion between events:
  $$\big\{\tilde{\mathcal E}_{N_0} \in [a+3\delta_0, a+4\delta_0] \big\} \supset \bigcap_{k\in \Lambda_{N_0}} \big\{\<\omega,e_k\> ^2\in [1+c_1, 1+c_2] \big\}.$$

(ii) $a+2\delta_0 \leq 0 \leq a+3\delta_0$. In this case, we have $[-\delta_0, \delta_0] \subset [a+\delta_0, b-\delta_0]$. Similar to case (i), we deduce the desired result from the two facts below:
  $$\big\{\tilde{\mathcal E}_{N_0} \in [-\delta_0, \delta_0] \big\} \supset \bigcap_{k\in \Lambda_{N_0}} \big\{\<\omega,e_k\> ^2\in [1-\delta_0/L, 1+\delta_0/L] \big\}$$
and
  $$p_2:= \P\big( \big\{\<\omega,e_k\> ^2\in [1-\delta_0/L, 1+\delta_0/L] \big\} \big)>0.$$

(iii) $a+3\delta_0< 0$. In this case, it suffices to show that
  \begin{equation}\label{prop-support.3}
  \P\big( \big\{\tilde{\mathcal E}_{N_0} \in [a+\delta_0, a+2\delta_0] \big\} \big)>0.
  \end{equation}
We assume $N_0$ is big enough such that the constant $L> -a$; then
  $$-1< c_3:= (a+\delta_0)/L< c_4:= (a+2\delta_0)/L<0.$$
We can get the inequality \eqref{prop-support.3} from the facts that
  $$\big\{\tilde{\mathcal E}_{N_0} \in [a+\delta_0, a+2\delta_0] \big\} \supset \bigcap_{k\in \Lambda_{N_0}} \big\{\<\omega,e_k\> ^2\in [1+c_3, 1+c_4] \big\}$$
and
  $$p_3:= \P\big( \big\{\<\omega,e_k\> ^2\in [1+c_3, 1+c_4] \big\} \big)>0.$$

Summarizing the above three cases, we complete the proof of \eqref{prop-support.1}.
\end{proof}

\subsection{The relation between $\<\omega\otimes \omega, G\>$ and $\H$}

In this part, for a white noise $\omega$, we follow the idea in \cite[Section 4.2]{FL-3} to show the relation between $\<\omega\otimes \omega, G\>$ and $\H$\,. Although we mainly work with real valued functions, we shall make use of the canonical complex orthonormal basis of $L^2\big(\T^2, \mathbb C\big)$: $\tilde e_k(x)= {\rm e}^{2\pi {\rm i} k\cdot x},\, k\in \Z^2, x\in \T^2$. Note that $\{\tilde e_k\otimes \tilde e_l \}_{k,l\in \Z^2}$ is an orthonormal basis of $L^2\big(\T^2\times \T^2, \mathbb C\big)$.

\begin{lemma}\label{lem-appendix}
Let $\omega$ be a white noise on $\T^2$. Assume $f\in C^\infty \big(\T^2\times \T^2,\R \big) $ is symmetric and $\int_{\T^2} f(x,x)\,\d x=0$. Then
  $$\<\omega\otimes \omega, f\> = \sum_{k,l\in \Z^2} f_{k,l} \<\omega, \tilde e_k\> \<\omega, \tilde e_l\> \quad \mbox{holds in } L^2(\Theta, \P),$$
where
  $$f_{k,l}= \<f, \tilde e_k\otimes \tilde e_l\>= \int_{\T^2\times \T^2} f(x,y)\tilde e_{k}(x) \tilde e_{l}(y)\,\d x\d y.$$
\end{lemma}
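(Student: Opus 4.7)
The plan is to obtain the identity by approximating $f$ with trigonometric polynomials, applying bilinearity on the approximants, and passing to an $L^2(\Theta,\P)$-limit via the Gaussian second-moment estimates of Lemma~\ref{lem-1}. First I would write $f$ as an $L^2(\T^2\times\T^2)$-convergent Fourier series in the basis $\{\tilde e_k\otimes \tilde e_l\}_{k,l\in\Z^2}$ and truncate it to a partial sum $f_N$ keeping only modes with $|k|,|l|\leq N$, so that $f_N\to f$ in $L^2(\T^2\times\T^2)$ by Parseval. Symmetry of $f$ gives $f_{k,l}=f_{l,k}$, so each $f_N$ is symmetric as well; moreover, since $f_N$ is a finite linear combination of tensor products of smooth functions, the bilinear extension of $\<\omega\otimes\omega,\cdot\>$ yields directly
\[
\<\omega\otimes\omega,f_N\>=\sum_{|k|,|l|\leq N} f_{k,l}\,\<\omega,\tilde e_k\>\<\omega,\tilde e_l\>.
\]
Consequently, proving the lemma reduces to establishing the $L^2(\Theta,\P)$-convergence $\<\omega\otimes\omega,f_N\>\to\<\omega\otimes\omega,f\>$ as $N\to\infty$.

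For that convergence I would split the difference $\<\omega\otimes\omega,f-f_N\>$ into its mean and its centered part. Lemma~\ref{lem-1}(iii), applied to the symmetric function $f-f_N$, bounds the centered $L^2$ norm by $\sqrt{2}\,\|f-f_N\|_{L^2(\T^2\times\T^2)}$, which tends to zero by Parseval. Lemma~\ref{lem-1}(ii) identifies the mean with $\int_{\T^2}(f-f_N)(x,x)\,\d x$; by Fubini together with orthogonality of the $\tilde e_k$ on $\T^2$, the quantity $\int_{\T^2}f_N(x,x)\,\d x$ reduces to a diagonal Fourier partial sum of the form $\sum_{|k|\leq N} f_{k,-k}$. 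Smoothness of $f$ forces rapid decay of $\{f_{k,l}\}$, so this partial sum is absolutely convergent and tends to $\int_{\T^2} f(x,x)\,\d x=0$ by hypothesis. Combining the two bounds yields the desired $L^2$ convergence.

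The main technical obstacle is precisely the control of this diagonal integral: $L^2$ convergence $f_N\to f$ on $\T^2\times\T^2$ carries no information about the restriction to the measure-zero diagonal $\{x=y\}$. Both the smoothness hypothesis (used to extract absolute convergence of the diagonal Fourier series) and the renormalization condition $\int_{\T^2} f(x,x)\,\d x=0$ are essential here; without either, one could at best recover the identity up to a deterministic constant, or fail to obtain convergence altogether. Once both pieces are combined, $\<\omega\otimes\omega,f_N\>\to\<\omega\otimes\omega,f\>$ in $L^2(\Theta,\P)$, which is the claimed identity.
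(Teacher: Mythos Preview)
Your proposal is correct and follows essentially the same approach as the paper: truncate to Fourier partial sums $f_N$, use Lemma~\ref{lem-1}(ii)--(iii) to control $\<\omega\otimes\omega,f-f_N\>$ in $L^2(\Theta,\P)$, and exploit smoothness to handle the diagonal integral. The only cosmetic difference is that the paper invokes uniform convergence $f_N\to f$ (available because $f\in C^\infty$) to deduce $\int_{\T^2} f_N(x,x)\,\d x\to\int_{\T^2} f(x,x)\,\d x=0$ directly, whereas you reach the same conclusion by rewriting the diagonal integral as the absolutely convergent series $\sum_{|k|\leq N} f_{k,-k}$; both justifications rest on the same rapid decay of Fourier coefficients.
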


\begin{proof}
Denote by
  \begin{equation}\label{lem-appendix.0}
  \hat\Lambda_N = \{k\in \Z^2: |k|\leq N\} = \Lambda_N\cup \{0\}.
  \end{equation}
Since $f\in C^\infty(\T^2\times \T^2)$, the partial sum of the Fourier series
  $$f_N(x,y):= \sum_{k,l\in \hat\Lambda_N} f_{k,l}\, \tilde e_k(x) \tilde e_l(y) $$
converges to $f$, uniformly on $\T^2\times \T^2$ and in $L^2(\T^2\times \T^2)$. In particular,
  \begin{equation}\label{lem-appendix.1}
  \lim_{N\to\infty} \int_{\T^2} f_N(x,x)\,\d x = \int_{\T^2} f(x,x)\,\d x =0.
  \end{equation}
It is obvious that $f_N(x,y)$ is smooth and symmetric. By (ii) and (iii) in Lemma \ref{lem-1},
  \begin{equation*}
  \E \bigg[ \Big(\<\omega\otimes \omega, f- f_N\> + \int_{\T^2} f_N(x,x)\,\d x \Big)^2 \bigg] = 2 \int_{\T^2\times \T^2} (f- f_N)^2(x,y) \,\d x\d y.
  \end{equation*}
As a result,
  \begin{equation}\label{lem-appendix.2}
  \E \big[\<\omega\otimes \omega, f- f_N\>^2 \big] \leq 4\int_{\T^2\times \T^2} (f- f_N)^2(x,y) \,\d x\d y + 2 \bigg(\int_{\T^2} f_N(x,x)\,\d x \bigg)^2.
  \end{equation}

Next, note that
  $$\<\omega\otimes \omega, f_N\>= \sum_{k,l\in \hat\Lambda_N} f_{k,l} \<\omega, \tilde e_k\>\<\omega, \tilde e_l\>.$$
Therefore, by \eqref{lem-appendix.2},
  $$\aligned
  & \E \bigg[\Big(\<\omega\otimes \omega, f\> - \sum_{k,l\in \hat\Lambda_N} f_{k,l} \<\omega, \tilde e_k\>\<\omega, \tilde e_l\>\Big)^2 \bigg] \\
  \leq & \, 4\int_{\T^2\times \T^2} (f- f_N)^2(x,y) \,\d x\d y + 2 \bigg(\int_{\T^2} f_N(x,x)\,\d x \bigg)^2.
  \endaligned$$
Thanks to \eqref{lem-appendix.1}, the desired result follows by letting $N\to \infty$.
\end{proof}

We need the following simple equality.

\begin{lemma}\label{lem-appendix-2}
Let $\{a_{k,l} \}_{k,l\in \hat\Lambda_N} \subset \mathbb C$ be satisfying $a_{k,l}= a_{l,k}$, $\overline{a_{k,l}}= a_{-k,-l}$. Then
  \begin{equation*}
  \E \Bigg[ \bigg|\sum_{k,l\in \hat\Lambda_N} a_{k,l} \<\omega, \tilde e_k\>\<\omega, \tilde e_l\> - \sum_{k\in \hat\Lambda_N} a_{k,-k} \bigg|^2\Bigg] = 2\sum_{k,l\in \hat\Lambda_N} |a_{k,l}|^2.
  \end{equation*}
\end{lemma}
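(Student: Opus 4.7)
The plan is to set $X := \sum_{k,l\in \hat\Lambda_N} a_{k,l} \<\omega, \tilde e_k\>\<\omega, \tilde e_l\>$, recognize $\sum_{k\in\hat\Lambda_N} a_{k,-k}$ as $\E X$, and then compute $\E[|X-\E X|^2]$ directly. The two ingredients needed are (a) the covariance of the complexified white noise and (b) Isserlis' (Wick's) formula for the fourth moment of a centered Gaussian family.

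First I would record the complex covariance. Since $\omega$ is a real white noise extended by linearity to complex test functions, one has $\E[\<\omega, \tilde e_k\>\<\omega, \tilde e_l\>] = \int_{\T^2} \tilde e_k(x)\tilde e_l(x)\,\d x = \delta_{k+l,0}$, so
$$\E X = \sum_{k,l\in\hat\Lambda_N} a_{k,l}\,\delta_{k+l,0} = \sum_{k\in\hat\Lambda_N} a_{k,-k}.$$
As a side remark (and as a useful sanity check) the conjugation hypothesis $\overline{a_{k,l}}=a_{-k,-l}$ combined with $\overline{\<\omega,\tilde e_k\>}=\<\omega,\tilde e_{-k}\>$ gives $\overline X = X$, so $X$ is real and $|X-\E X|^2 = (X-\E X)^2$.

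Next I would compute $\E X^2$ by applying Isserlis' formula to the centered jointly Gaussian family $\{\<\omega,\tilde e_j\>\}_{j\in \hat\Lambda_N}$. Each 4-point moment splits into three pairings of $(k,l,k',l')$. The pairing $\{k{\sim}l,\,k'{\sim}l'\}$ produces $\delta_{l,-k}\delta_{l',-k'}$ and, upon summation, contributes exactly $(\E X)^2$. The pairing $\{k{\sim}k',\,l{\sim}l'\}$ gives
$$\sum_{k,l\in\hat\Lambda_N} a_{k,l}\,a_{-k,-l} = \sum_{k,l\in\hat\Lambda_N} a_{k,l}\,\overline{a_{k,l}} = \sum_{k,l\in\hat\Lambda_N} |a_{k,l}|^2,$$
using the conjugation hypothesis. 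The pairing $\{k{\sim}l',\,l{\sim}k'\}$ gives
$$\sum_{k,l\in\hat\Lambda_N} a_{k,l}\,a_{-l,-k} = \sum_{k,l\in\hat\Lambda_N} a_{k,l}\,\overline{a_{l,k}} = \sum_{k,l\in\hat\Lambda_N} |a_{k,l}|^2,$$
where the symmetry $a_{l,k}=a_{k,l}$ was used in the last step. Subtracting $(\E X)^2$ then yields the claimed identity $\E[|X-\E X|^2] = 2\sum_{k,l\in\hat\Lambda_N}|a_{k,l}|^2$.

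I do not expect a genuine obstacle here: the two hypotheses $a_{k,l}=a_{l,k}$ and $\overline{a_{k,l}}=a_{-k,-l}$ are precisely what is needed to match the two non-diagonal Wick pairings with $|a_{k,l}|^2$ and, simultaneously, to guarantee the reality of $X$ (so that taking the absolute value is harmless). The only bookkeeping to be careful with is the complex covariance of $\omega$ on the basis $\{\tilde e_k\}$ (no complex conjugation on the second factor, since $\omega$ is extended $\mathbb{C}$-linearly), and the correct enumeration of the three Wick pairings; beyond that the argument is purely algebraic.
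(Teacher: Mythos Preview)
Your proposal is correct and follows essentially the same route as the paper: identify $\sum_k a_{k,-k}$ as the mean, expand the square, apply the Isserlis--Wick formula to the four-point Gaussian moment, and use the symmetry and conjugation hypotheses to turn the two off-diagonal pairings into $\sum_{k,l}|a_{k,l}|^2$ each. The paper's proof is identical in structure (with indices $m,n$ in place of your $k',l'$).
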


\begin{proof}
Since $\overline{\<\omega, \tilde e_k\>}= \<\omega, \tilde e_{-k}\>$, it is clear that $\sum_{k,l\in \hat\Lambda_N} a_{k,l} \<\omega, \tilde e_k\>\<\omega, \tilde e_l\>$ is real and
  \begin{equation}\label{lem-appendix-2.1}
  \sum_{k\in \hat\Lambda_N} a_{k,-k} = \E \Bigg(\sum_{k,l\in \hat\Lambda_N} a_{k,l} \<\omega, \tilde e_k\>\<\omega, \tilde e_l\> \Bigg).
  \end{equation}
We have
  $$\Bigg(\sum_{k,l\in \hat\Lambda_N} a_{k,l} \<\omega, \tilde e_k\>\<\omega, \tilde e_l\> \Bigg)^2 = \sum_{k,l,m,n \in \hat\Lambda_N} a_{k,l} a_{m,n} \<\omega, \tilde e_k\>\<\omega, \tilde e_l\> \<\omega, \tilde e_m\>\<\omega, \tilde e_n\>, $$
and by the Isserlis--Wick theorem,
  $$\aligned
  \E\big(\<\omega, \tilde e_k\>\<\omega, \tilde e_l\> \<\omega, \tilde e_m\>\<\omega, \tilde e_n\>\big) = &\  \E\big(\<\omega, \tilde e_k\>\<\omega, \tilde e_l\> \big) \E\big( \<\omega, \tilde e_m\>\<\omega, \tilde e_n\>\big)\\
  & + \E\big(\<\omega, \tilde e_k\>\<\omega, \tilde e_m\> \big) \E\big( \<\omega, \tilde e_l\>\<\omega, \tilde e_n\>\big) \\
  & + \E\big(\<\omega, \tilde e_k\>\<\omega, \tilde e_n\> \big) \E\big( \<\omega, \tilde e_l\>\<\omega, \tilde e_m\>\big) \\
  =&\ \delta_{k,-l}\delta_{m,-n} + \delta_{k,-m}\delta_{l,-n} + \delta_{k,-n}\delta_{l,-m}.
  \endaligned$$
Therefore,
  $$\aligned \E \Bigg(\sum_{k,l\in \hat\Lambda_N} a_{k,l} \<\omega, \tilde e_k\>\<\omega, \tilde e_l\> \Bigg)^2 &= \sum_{k,m \in \hat\Lambda_N} a_{k,-k} a_{m,-m} + \sum_{k,l \in \hat\Lambda_N} a_{k,l} a_{-k,-l}  + \sum_{k,l \in \hat\Lambda_N} a_{k,l} a_{-l,-k}  \\
  &= \Bigg(\sum_{k \in \hat\Lambda_N} a_{k,-k} \Bigg)^2 + 2 \sum_{k,l \in \hat\Lambda_N} |a_{k,l}|^2,
  \endaligned$$
where we have used the facts $a_{-l,-k} = a_{-k,-l} = \overline{a_{k,l}}$. Combining this equality with \eqref{lem-appendix-2.1} finishes the proof.
\end{proof}

Recall the definition of $\hat\Lambda_N$ in \eqref{lem-appendix.0}. To simplify the notations, we introduce
  $$\hat \omega_N = \hat \Pi_N \omega = \sum_{k\in \hat\Lambda_N} \<\omega, \tilde e_k\> \tilde e_k.$$
Then
  \begin{equation}\label{partial-sum}
  \<\hat \omega_N\otimes \hat \omega_N, G\> = \sum_{k,l\in \hat\Lambda_N} \<G, \tilde e_k\otimes \tilde e_l\> \<\omega, \tilde e_k\>\<\omega, \tilde e_l\>
  \end{equation}
is the partial sum of the series.

\begin{lemma}\label{lem-2}
We have
  $$\<\hat \omega_N\otimes \hat \omega_N, G\> = -\frac1{4\pi^2} \sum_{k\in \Lambda_N} \frac1{|k|^2} \<\omega, e_k\>^2.$$
\end{lemma}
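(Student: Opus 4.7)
The plan is to compute $\<\hat\omega_N\otimes \hat\omega_N, G\>$ by a direct Fourier calculation, starting from the series representation (2.8) stated just before the lemma, and then converting from the complex exponential basis $\{\tilde e_k\}$ to the real trigonometric basis $\{e_k\}$.

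First I would identify the Fourier coefficients of $G$. The Biot--Savart formula displayed in the introduction pins down the normalisation of $G$: since $K = \nabla^\perp G$ and $K(x) = -\frac{{\rm i}}{2\pi}\sum_{k\in\Z_0^2}\frac{k^\perp}{|k|^2}{\rm e}^{2\pi{\rm i}k\cdot x}$, integrating shows
\[
G(x) = -\frac{1}{4\pi^2}\sum_{m\in\Z_0^2}\frac{1}{|m|^2}{\rm e}^{2\pi{\rm i}m\cdot x}.
\]
Treating $G$ as a function on $\T^2\times\T^2$ via $G(x,y)=G(x-y)$, a straightforward integration in $x$ and $y$ separately gives $\<G,\tilde e_k\otimes \tilde e_l\>= -\frac{1}{4\pi^2|k|^2}\delta_{l,-k}$ for $k\neq 0$, and zero when $k=0$. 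Substituting this into the partial-sum identity \eqref{partial-sum} immediately collapses the double sum to
\[
\<\hat\omega_N\otimes\hat\omega_N,G\> \;=\; -\frac{1}{4\pi^2}\sum_{k\in\Lambda_N}\frac{1}{|k|^2}\<\omega,\tilde e_k\>\<\omega,\tilde e_{-k}\>.
\]

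The remaining task is to turn $\<\omega,\tilde e_k\>\<\omega,\tilde e_{-k}\>$ into $\<\omega,e_k\>^2$ (after summing). Since $\omega$ is real, $\<\omega,\tilde e_{-k}\>= \overline{\<\omega,\tilde e_k\>}$, so the product equals $|\<\omega,\tilde e_k\>|^2$. For $k\in\Z^2_+$, comparing the definition \eqref{real-basis} with $\tilde e_k(x)=\cos(2\pi k\cdot x)+{\rm i}\sin(2\pi k\cdot x)$ yields $\tilde e_k = \frac{1}{\sqrt 2}(e_k - {\rm i}e_{-k})$, hence $|\<\omega,\tilde e_k\>|^2 = \frac12\bigl(\<\omega,e_k\>^2+\<\omega,e_{-k}\>^2\bigr)$; for $k\in\Z^2_-$ the same identity holds because $|\<\omega,\tilde e_k\>|^2=|\<\omega,\tilde e_{-k}\>|^2$. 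Splitting $\Lambda_N = (\Lambda_N\cap\Z^2_+)\sqcup(\Lambda_N\cap\Z^2_-)$, using that $\frac{1}{|k|^2}$ is symmetric under $k\mapsto -k$, and reindexing, one checks
\[
\sum_{k\in\Lambda_N}\frac{1}{|k|^2}|\<\omega,\tilde e_k\>|^2 \;=\; \sum_{k\in\Lambda_N}\frac{1}{|k|^2}\<\omega,e_k\>^2,
\]
which combined with the previous display gives the claim.

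The substantive point—and the only place where one has to be careful rather than routine—is the bookkeeping: the sign convention for $G$ (fixed uniquely by the Biot--Savart kernel stated in Section~1) must agree with the sign in the final formula, and the conversion between $\tilde e_k$ and $(e_k,e_{-k})$ must be tracked separately on $\Z^2_+$ and $\Z^2_-$ before one may legitimately collapse the sum over $\Lambda_N$ back into a single expression in the real basis. No analytic difficulty arises, since $\hat\omega_N$ is a trigonometric polynomial and all sums are finite.
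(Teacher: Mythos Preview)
Your proof is correct and follows essentially the same route as the paper: both compute the Fourier coefficients $\<G,\tilde e_k\otimes\tilde e_l\> = -\frac{1}{4\pi^2|k|^2}\delta_{l,-k}$, substitute into \eqref{partial-sum}, and then use $|\<\omega,\tilde e_k\>|^2 = \tfrac12(\<\omega,e_k\>^2+\<\omega,e_{-k}\>^2)$ together with the $k\mapsto -k$ symmetry of $\Lambda_N$ to pass to the real basis. The only cosmetic difference is that the paper obtains the Fourier coefficient via the convolution $G\ast\tilde e_l$, whereas you integrate directly.
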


\begin{proof}
Recall that
  $$G(x) = -\frac1{4\pi^2} \sum_{k\in \Z_0^2} \frac1{|k|^2} {\rm e}^{2\pi {\rm i} k\cdot x} = -\frac1{4\pi^2} \sum_{k\in \Z_0^2} \frac1{|k|^2} \tilde e_k(x). $$
Therefore, for $l\neq 0$,
  $$ (G\ast \tilde e_l)(x) = -\frac1{4\pi^2}  \frac1{|l|^2} \tilde e_l(x),$$
which implies that
  $$\<G, \tilde e_k\otimes \tilde e_l\> = \int_{\T^2} \tilde e_k(x) (G\ast \tilde e_l)(x)\,\d x= -\frac1{4\pi^2}  \frac1{|l|^2} \delta_{k,-l}. $$
Hence,
  $$\<\hat \omega_N\otimes \hat \omega_N, G\> = -\frac1{4\pi^2} \sum_{k\in \Lambda_N} \frac1{|k|^2} \<\omega, \tilde e_k\>\<\omega, \tilde e_{-k}\> = -\frac1{4\pi^2} \sum_{k\in \Lambda_N} \frac1{|k|^2} |\<\omega, \tilde e_k\>|^2.$$
The desired identity follows from $|\<\omega, \tilde e_k\>|^2 = \frac12 (\<\omega, e_k\>^2 + \<\omega, e_{-k}\>^2)$ for all $k\in \Lambda_N$.
\end{proof}

Now we can prove the main result of this section.

\begin{theorem}\label{thm-coincidence}
Let $\omega$ be a white noise on $\T^2$. Almost surely, it holds
  $$\<\omega\otimes \omega, G\> =-2\, \H\, .$$
\end{theorem}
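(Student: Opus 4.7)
The plan is to identify a concrete approximating sequence $G_n$ satisfying the hypotheses of Proposition \ref{prop-1}, compute $\<\omega\otimes\omega, G_n\>$ via its Fourier decomposition, and pass to the limit. The naive attempt to use Lemma \ref{lem-2} directly fails: the expression $\<\hat\omega_N\otimes\hat\omega_N,G\> = -\frac{1}{4\pi^2}\sum_{k\in\Lambda_N}|k|^{-2}\<\omega,e_k\>^2$ differs from $-2\tilde{\mathcal E}_N$ by the divergent deterministic constant $\frac{1}{4\pi^2}\sum_{k\in\Lambda_N}|k|^{-2}$, so letting $N\to\infty$ in Lemma \ref{lem-2} alone produces no meaningful limit. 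The condition $\int G_n(x,x)\,\d x\to 0$ in Proposition \ref{prop-1} is precisely what absorbs this constant in a consistent way, and the cutoff $G_n=G(1-\chi_n)$ at the end of Section \ref{sec-2.1} realises this while remaining smooth and symmetric.

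Since $G_n\in C^\infty(\T^2\times\T^2)$ is symmetric and satisfies $G_n(x,x)=0$, Lemma \ref{lem-appendix} applies. Because $G_n$ depends only on $x-y$, its Fourier coefficients vanish off the antidiagonal $l=-k$ (as in Lemma \ref{lem-2}), giving in $L^2(\Theta,\P)$
\begin{equation*}
\<\omega\otimes\omega, G_n\> = \sum_{k\in\Z_0^2}\hat G_n(k)\,\<\omega,e_k\>^2,
\end{equation*}
understood as the $L^2$-limit of partial sums over $\Lambda_N$. Decomposing $\hat G_n(k)=-\frac{1}{4\pi^2|k|^2}+\delta_n(k)$ with $\delta_n(k)=-\widehat{G\chi_n}(k)$, and writing $\<\omega,e_k\>^2=(\<\omega,e_k\>^2-1)+1$, the truncated sum splits as
\begin{equation*}
\sum_{k\in\Lambda_N}\hat G_n(k)\<\omega,e_k\>^2 = -\frac{1}{4\pi^2}\sum_{k\in\Lambda_N}\frac{\<\omega,e_k\>^2-1}{|k|^2} + \sum_{k\in\Lambda_N}\delta_n(k)\bigl(\<\omega,e_k\>^2-1\bigr) + \sum_{k\in\Lambda_N}\hat G_n(k).
\end{equation*}
As $N\to\infty$ the first sum converges in $L^2$ to $-2\H$ by Proposition \ref{prop-2}; the second converges in $L^2$ to some $R_n$ with $\E|R_n|^2 = 2\|\delta_n\|_{\ell^2}^2 \leq 2\|G\chi_n\|_{L^2}^2$ (Parseval together with $\mathrm{Var}(\<\omega,e_k\>^2)=2$ and independence of these i.i.d.\ variables); the third is deterministic and tends to $G_n(0)-\hat G_n(0)=-\hat G_n(0)$ because the Fourier series of the smooth function $G_n$ converges absolutely. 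Thus $\<\omega\otimes\omega, G_n\> = -2\H + R_n - \hat G_n(0)$ in $L^2$.

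Letting $n\to\infty$ concludes. Because $G$ has only a logarithmic singularity at $0$ and $\chi_n$ is supported in a shrinking neighborhood of the origin, both $\|G\chi_n\|_{L^2(\T^2)}$ and $\|G\chi_n\|_{L^1(\T^2)}$ tend to zero; consequently $R_n\to 0$ in $L^2$ and $\hat G_n(0)=-\int_{\T^2} G\chi_n\,\d x\to 0$. On the other hand Proposition \ref{prop-1}, whose hypotheses are satisfied by this $G_n$, gives $\<\omega\otimes\omega,G_n\>\to\<\omega\otimes\omega,G\>$ in $L^2$. Combining these limits yields $\<\omega\otimes\omega,G\>=-2\H$ in $L^2(\Theta,\P)$, and hence almost surely. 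The essential difficulty, namely the divergent diagonal constant, has been resolved already in the choice of $G_n$; what remains is Fourier bookkeeping that cleanly separates the diverging piece from the genuinely renormalized energy.
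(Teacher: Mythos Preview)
Your argument is correct. The ingredients overlap with the paper's proof---both rely on the smooth cutoff $G_n=G(1-\chi_n)$, on Lemma~\ref{lem-appendix}, and on the $L^2$ bound $\|G-G_n\|_{L^2}=\|G\chi_n\|_{L^2}\to 0$---but the organisation is genuinely different. The paper runs a double approximation: it introduces the Galerkin truncations $\hat\omega_N$ alongside $G_n$, splits $\<\omega\otimes\omega,G\>+2\H$ into four pieces via the chain $G\to G_n$, $\omega\to\hat\omega_N$, then invokes Lemma~\ref{lem-2} and the variance identity of Lemma~\ref{lem-appendix-2} to control the cross term $\<\hat\omega_N\otimes\hat\omega_N,G_n\>+2\tilde{\mathcal E}_N$, before sending first $N\to\infty$ and then $n\to\infty$. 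You instead exploit the convolution structure of $G_n$ to collapse the double Fourier series of Lemma~\ref{lem-appendix} onto the antidiagonal, obtaining the closed formula $\<\omega\otimes\omega,G_n\>=\sum_{k\in\Z_0^2}\hat G_n(k)\<\omega,e_k\>^2$ directly; the $N$-truncation is then only an intermediate partial sum, not a separate approximation layer, and the algebraic split $\hat G_n(k)=-\frac{1}{4\pi^2|k|^2}+\delta_n(k)$ isolates $-2\H$, the small remainder $R_n$, and the deterministic constant $-\hat G_n(0)$ in one stroke. This bypasses Lemma~\ref{lem-appendix-2} and the four-term triangle inequality, and makes transparent that the ``renormalisation'' is precisely the vanishing of $\hat G_n(0)=-\int G\chi_n$. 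The paper's route is more modular (each lemma is reusable in other contexts), while yours is shorter and more explicit about where the divergent constant goes.
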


\begin{proof}
Let $G_n$ be the smooth functions defined at the end of Section \ref{sec-2.1}.  We have
  \begin{equation}\label{prop-appendix.1}
  \aligned
  &\, \E \big[\big(\<\omega\otimes \omega, G\> + 2\, \H \big)^2 \big] \\
  \leq & \, 4\, \E \big[\<\omega\otimes \omega, G - G_n\>^2 \big] + 4\, \E\big[\big(\<\omega\otimes \omega, G_n\> - \<\hat \omega_N\otimes \hat \omega_N, G_n\> \big)^2 \big] \\
  & + 4\, \E \big[ \big( \<\hat \omega_N\otimes \hat \omega_N, G_n\> + 2\, \tilde{\mathcal E}_N \big)^2 \big] + 16\, \E \big[\big( \tilde{\mathcal E}_N - \H \big)^2 \big].
  \endaligned
  \end{equation}

We deal with these terms one-by-one. By \eqref{eq-1},
  \begin{equation}\label{prop-appendix.2}
  \E \big[\<\omega\otimes \omega, G-G_n\>^2 \big] \leq 2 \int_{\T^2\times \T^2} ( G-G_n)^2(x,y) \,\d x\d y.
  \end{equation}
Next, for any fixed $n\geq 1$, Lemma \ref{lem-appendix} implies
  \begin{equation}\label{prop-appendix.2.5}
  \E \big[ \big(\<\hat \omega_N\otimes \hat \omega_N, G_n\> - \<\omega\otimes \omega, G_n\>\big)^2\big]\to 0 \quad \mbox{as } N\to \infty.
  \end{equation}
Moreover, by Proposition \ref{prop-2}, the last term in \eqref{prop-appendix.1} vanishes as $N\to \infty$.

It remains to treat the third term on the r.h.s. of \eqref{prop-appendix.1}. By \eqref{prop-2.1} and Lemma \ref{lem-2},
  \begin{equation*}
  -2\, \tilde{\mathcal E}_N= \<\hat \omega_N\otimes \hat \omega_N, G\> - \E \<\hat \omega_N\otimes \hat \omega_N, G\>.
  \end{equation*}
Therefore,
  $$\aligned
  &\, \E \big[ \big( \<\hat \omega_N\otimes \hat \omega_N, G_n\> + 2\, \tilde{\mathcal E}_N \big)^2 \big] \\
  = & \, \E \big[ \big( \<\hat \omega_N\otimes \hat \omega_N, G_n-G\> - \E\<\hat \omega_N\otimes \hat \omega_N, G_n-G\> + \E \<\hat \omega_N\otimes \hat \omega_N, G_n\> \big)^2 \big] \\
  \leq &\, 2\, \E \big[ \big( \<\hat \omega_N\otimes \hat \omega_N, G_n-G\> - \E\<\hat \omega_N\otimes \hat \omega_N, G_n-G\> \big)^2 \big] + 2 \big[\E \<\hat \omega_N\otimes \hat \omega_N, G_n\> \big]^2.
  \endaligned$$
By \eqref{partial-sum} and Lemma \ref{lem-appendix-2},
  $$\aligned
  &\, \E \big[\big(\<\hat \omega_N\otimes \hat \omega_N, G_n- G \> - \E \<\hat \omega_N\otimes \hat \omega_N, G_n- G \> \big)^2 \big]\\
  = &\, 2 \sum_{k,l\in \hat\Lambda_N} \big| \big\<G_n- G, \tilde e_k\otimes \tilde e_l \big\> \big|^2 \leq 2 \int_{\T^2\times \T^2} (G_n- G)^2(x,y)\,\d x\d y .
  \endaligned $$
Hence,
  $$\aligned
  \E \big[ \big( \<\hat \omega_N\otimes \hat \omega_N, G_n\> + 2\tilde{\mathcal E}_N \big)^2 \big] \leq &\, 4 \int_{\T^2\times \T^2} (G_n- G)^2(x,y)\,\d x\d y + 2 \big[ \E \<\hat \omega_N\otimes \hat \omega_N, G_n \> \big]^2.
  \endaligned $$
As a result of \eqref{prop-appendix.2.5},
  \begin{equation*}
  \lim_{N\to \infty} \E \<\hat \omega_N\otimes \hat \omega_N, G_n\> =  \E \<\omega\otimes \omega, G_n\> = \int_{\T^2} G_n(x,x)\,\d x=0,
  \end{equation*}
where the second step is due to Lemma \ref{lem-1}(ii). Thus,
  $$\limsup_{N\to\infty} \E \big[ \big( \<\hat \omega_N\otimes \hat \omega_N, G_n\> + 2\, \tilde{\mathcal E}_N \big)^2 \big] \leq 4 \int_{\T^2\times \T^2} (G_n- G)^2(x,y)\,\d x\d y.$$
Combining the above inequality with \eqref{prop-appendix.1}--\eqref{prop-appendix.2.5}, letting $N\to \infty$ in \eqref{prop-appendix.1} yield
  $$\limsup_{N\to\infty} \E \big[\big(\<\omega\otimes \omega, G\> + 2\, \H \big)^2 \big] \leq 24 \int_{\T^2\times \T^2} (G_n- G)^2(x,y)\,\d x\d y.$$
We finish the proof by sending $n\to \infty$.
\end{proof}

\section{Proof of the main result}

In this section, we first show that the Hamiltonian $\mathcal H_N$ converge weakly to the renormalized energy $\H$, by making use of the weak convergence of the random point vortices to the white noise. Thanks to the fact that the distribution of $\H$ has a density, finally we are able to prove Theorem \ref{main-thm}.

First of all, we prove the following intermediate result.

\begin{proposition}\label{prop-weak}
The Hamiltonian $\mathcal H_N$ defined in \eqref{Hamiltonian} converge weakly to the renormalized energy $\H$.
\end{proposition}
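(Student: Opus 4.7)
The plan is to sandwich $\mathcal H_N$ between smooth intermediaries and invoke the known weak convergence $\mu_N\to\mu$ on $H^{-1-}(\T^2)$. Let $G_n$ denote the smooth symmetric approximations constructed after Proposition \ref{prop-1} (satisfying $G_n(x,x)\equiv 0$ and $\|G-G_n\|_{L^2(\T^2\times\T^2)}\to 0$), and set
$$\mathcal H_N^{(n)}:= -\tfrac12\<\omega_N\otimes\omega_N,G_n\>,\qquad \mathcal H^{(n)}:= -\tfrac12\<\omega\otimes\omega,G_n\>.$$
By Proposition \ref{prop-1} combined with Theorem \ref{thm-coincidence}, $\mathcal H^{(n)}\to\H$ in $L^2(\Theta,\P)$, so the task splits into two estimates: a uniform tail bound for $\mathcal H_N-\mathcal H_N^{(n)}$, and a fixed-$n$ weak-convergence statement $\mathcal H_N^{(n)}\Rightarrow\mathcal H^{(n)}$.

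For the tail bound I would show $\sup_N\E[|\mathcal H_N-\mathcal H_N^{(n)}|^2]\leq \tfrac12\|G-G_n\|_{L^2(\T^2\times\T^2)}^2$. Writing
$$\mathcal H_N-\mathcal H_N^{(n)}= -\frac{1}{2N}\sum_{1\leq i\neq j\leq N}\xi_i\xi_j(G-G_n)(x_i-x_j),$$
the second-moment expansion together with Isserlis' formula for $\E[\xi_i\xi_j\xi_k\xi_l]$ (which, under $i\neq j$, $k\neq l$, reduces to the two pairings $(i,j)=(k,l)$ and $(i,j)=(l,k)$) and the independence of the positions $x_m$ collapses the sum to $\frac{N-1}{2N}\int(G-G_n)^2(x-y)\,\d x\d y$, which vanishes as $n\to\infty$ by the defining property of $G_n$.

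The core step is to prove $\mathcal H_N^{(n)}\Rightarrow\mathcal H^{(n)}$ for each fixed $n$. Because $G_n$ is smooth, its Fourier truncation $G_{n,M}:=\sum_{k,l\in\hat\Lambda_M}G_{n,k,l}\,\tilde e_k\otimes\tilde e_l$ converges to $G_n$ in $L^2(\T^2\times\T^2)$, and the truncated pairing
$$\<\omega_N\otimes\omega_N,G_{n,M}\>=\sum_{k,l\in\hat\Lambda_M}G_{n,k,l}\<\omega_N,\tilde e_k\>\<\omega_N,\tilde e_l\>$$
is a polynomial in the finite family $(\<\omega_N,\tilde e_k\>)_{k\in\hat\Lambda_M}$ of linear functionals which are continuous on $H^{-1-}(\T^2)$. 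Weak convergence $\mu_N\to\mu$ together with the continuous mapping theorem therefore yields $\<\omega_N\otimes\omega_N,G_{n,M}\>\Rightarrow\<\omega\otimes\omega,G_{n,M}\>$ for every fixed $M$, the latter being interpreted through the $L^2$-convergent chaos expansion as in Lemma \ref{lem-appendix}. To pass to $M\to\infty$ uniformly in $N$ I would repeat the Isserlis computation of the previous step with $G_n-G_{n,M}$ replacing $G-G_n$; any deterministic correction from the non-vanishing diagonal of $G_{n,M}$ vanishes because the Fourier series of the smooth function $G_n$ converges uniformly to $G_n(x,x)\equiv 0$ on the diagonal.

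A standard Slutsky-type argument then assembles the pieces: for bounded uniformly continuous $\varphi$ one splits $|\E\varphi(\mathcal H_N)-\E\varphi(\H)|$ through $\mathcal H_N^{(n)}$ and $\mathcal H^{(n)}$, chooses $n$ large enough to absorb the $L^2$-tails uniformly in $N$, and then sends $N\to\infty$ at fixed $n$. I expect the main obstacle to be the Fourier-truncation step: the weak convergence in $H^{-1-}(\T^2)$ controls only finitely many modes at once, so the uniform-in-$N$ smallness of $\<\omega_N\otimes\omega_N,G_n-G_{n,M}\>$ must be established by a dedicated second-moment computation for the point-vortex measure that mirrors the white-noise estimate of Lemma \ref{lem-1}(iii).
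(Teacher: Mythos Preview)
Your proposal is correct, and it shares with the paper the same two pillars: the uniform-in-$N$ second-moment identity \eqref{point-vortices} for $\langle\omega_N\otimes\omega_N,G-G_n\rangle$ and the smooth approximations $G_n$. The route, however, is genuinely different. The paper invokes Skorokhod's representation theorem to place all the $\omega_N$ and $\omega$ on a common probability space with $\tilde\omega_N\to\tilde\omega$ almost surely; then, for fixed $n$, the middle term $\langle\tilde\omega_N\otimes\tilde\omega_N,G_n\rangle\to\langle\tilde\omega\otimes\tilde\omega,G_n\rangle$ follows directly from continuity of $\omega\mapsto\langle\omega\otimes\omega,G_n\rangle$ on $H^{-1-}(\T^2)$ together with uniform integrability. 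This yields the stronger $L^1$-convergence statement of Lemma \ref{lem-3-1}, which is reused verbatim in the proof of Theorem \ref{main-thm}. Your argument avoids Skorokhod and works purely at the level of laws, paying for this with an additional Fourier truncation layer $G_{n,M}$ so that the continuous mapping theorem applies to a finite-dimensional polynomial; the diagonal correction you flag is handled correctly via uniform convergence of the Fourier series of $G_n$. Note, incidentally, that this second truncation is not strictly necessary: since $G_n\in H^{2+}(\T^2\times\T^2)$, the quadratic map $\omega\mapsto\langle\omega\otimes\omega,G_n\rangle$ is already continuous on $H^{-1-}(\T^2)$, so the continuous mapping theorem applies to it directly. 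What your approach buys is a Skorokhod-free argument; what the paper's approach buys is a one-step approximation and an $L^1$ statement on a coupled space that feeds immediately into the main theorem.
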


By \eqref{Hamiltonian-1} and Theorem \ref{thm-coincidence}, it suffices to prove that $\<\omega_N \otimes \omega_N, G\>$ converge weakly to $\<\omega \otimes \omega, G\>$, where $\omega_N$ is the random point vortices and $\omega$ is a white noise on $\T^2$. This result seems to be obvious, thanks to the weak convergence of $\omega_N$ to $\omega$, see \cite[Proposition 21]{F1}. However, in view of the proof of our main result, we give the details here.

The following equality will be very useful in the sequel (cf. \cite[Lemma 23]{F1}): if $f\in L^2(\T^2\times \T^2,\R)$ is symmetric and $f(x,x)\equiv 0$, then
  \begin{equation}\label{point-vortices}
  \E\big[ \<\omega_N \otimes \omega_N, f\>^2 \big] = 2 \int_{\T^2\times \T^2} f(x,y)^2\,\d x\d y.
  \end{equation}
Note that, we do not need the boundedness of $f$ for proving the above equality and, by convention, $G(x,x)=0$ for all $x\in \T^2$.

Recall that $\mu_N$ is the law of $\omega_N$ on $H^{-1-}(\T^2)$ and that the sequence $\{\mu_N\}_{N\geq 1}$ converge weakly to the enstrophy measure $\mu$. Thus by the Skorokhod theorem, there exists a new probability space $\big( \tilde\Theta, \tilde{\mathcal F}, \tilde\P \big)$ and a sequence of random variables $\tilde\omega_N: \tilde\Theta \to H^{-1-}(\T^2)$, and $\tilde\omega: \tilde\Theta \to H^{-1-}(\T^2)$ such that
\begin{itemize}
\item[(a)] $\tilde\omega_N \stackrel{\L}{\sim} \mu_N$ and $\tilde\omega \stackrel{\L}{\sim} \mu$;
\item[(b)] $\tilde\P$-a.s., $\tilde\omega_N $ converge  to $\tilde\omega$ as $N\to \infty$.
\end{itemize}
In particular, $\tilde\omega$ is a white noise on $\T^2$.

Let $f: \R \to \R$ be any bounded and uniformly continuous function. We need to show
  $$\lim_{N\to \infty} \E f\big(\<\omega_N \otimes \omega_N, G\>\big) = \E f\big(\<\omega \otimes \omega, G\>\big). $$
We have
  $$\E f\big(\<\omega_N \otimes \omega_N, G\>\big) - \E f\big(\<\omega \otimes \omega, G\>\big) = \tilde \E f\big(\<\tilde\omega_N \otimes \tilde\omega_N, G\>\big) - \tilde \E f\big(\<\tilde\omega \otimes \tilde\omega, G\>\big).$$
From the next result we deduce that the above quantity vanishes as $N\to \infty$.

\begin{lemma}\label{lem-3-1}
We have
  $$\lim_{N\to \infty} \tilde \E \big|\<\tilde\omega_N \otimes \tilde\omega_N, G\>- \<\tilde\omega \otimes \tilde\omega, G\>\big| =0.$$
\end{lemma}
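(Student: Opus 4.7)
The plan is to approximate $G$ by the smooth kernels $G_n$ introduced at the end of Section \ref{sec-2.1}, for which $G_n(x,x)\equiv 0$ and $\|G-G_n\|_{L^2(\T^2\times\T^2)}\to 0$, and to split the target quantity via the triangle inequality as
$$
\tilde\E\bigl|\<\tilde\omega_N\otimes\tilde\omega_N,G\>-\<\tilde\omega\otimes\tilde\omega,G\>\bigr| \le \tilde\E\bigl|\<\tilde\omega_N\otimes\tilde\omega_N,G-G_n\>\bigr| + \tilde\E\bigl|\<\tilde\omega_N\otimes\tilde\omega_N,G_n\>-\<\tilde\omega\otimes\tilde\omega,G_n\>\bigr| + \tilde\E\bigl|\<\tilde\omega\otimes\tilde\omega,G_n-G\>\bigr|.
$$
I would make the first and third pieces small uniformly in $N$ by taking $n$ large, and then show that for each such fixed $n$ the middle piece vanishes as $N\to\infty$ by exploiting the smoothness of $G_n$.

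The outer two pieces are immediate: because $G-G_n$ is symmetric and vanishes on the diagonal, identity \eqref{point-vortices} applied to the (coupled) point vortex field $\tilde\omega_N$ and the analogous $L^2$ bound \eqref{eq-1} applied to the white noise $\tilde\omega$ both give
$$
\tilde\E\bigl|\<\tilde\omega_N\otimes\tilde\omega_N,G-G_n\>\bigr|^2 \vee \tilde\E\bigl|\<\tilde\omega\otimes\tilde\omega,G-G_n\>\bigr|^2 \le 2\|G-G_n\|_{L^2(\T^2\times\T^2)}^2.
$$
A single Cauchy--Schwarz then drives both $L^1$ norms to $0$ as $n\to\infty$, uniformly in $N$.

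For the middle piece, fix $n$. Since $G_n$ is smooth on $\T^2\times\T^2$, the associated integral operator $T_{G_n}\omega(x)=\int_{\T^2}G_n(x,y)\,\omega(y)\,\d y$ (interpreted by duality) maps $H^{-1-}(\T^2)$ continuously into $C^\infty(\T^2)$, so the quadratic functional $\omega\mapsto\<\omega\otimes\omega,G_n\>=\<\omega,T_{G_n}\omega\>$ is sequentially continuous on $H^{-1-}(\T^2)$. The Skorokhod a.s.\ convergence $\tilde\omega_N\to\tilde\omega$ in $H^{-1-}(\T^2)$ then yields $\<\tilde\omega_N\otimes\tilde\omega_N,G_n\>\to\<\tilde\omega\otimes\tilde\omega,G_n\>$ $\tilde\P$-almost surely, while the uniform $L^2$ bound $\tilde\E|\<\tilde\omega_N\otimes\tilde\omega_N,G_n\>|^2\le 2\|G_n\|_{L^2}^2$ (again from \eqref{point-vortices}) provides uniform integrability and upgrades this to $L^1$ convergence. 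Sending first $N\to\infty$ and then $n\to\infty$ closes the estimate.

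The main obstacle is precisely this middle piece: one needs to identify the a priori $L^2$-defined random variable $\<\omega\otimes\omega,G_n\>$ with a pointwise kernel pairing along the Skorokhod realizations of both the white noise and the point vortex fields, so that the continuity argument applies to the very sequence $\tilde\omega_N\to\tilde\omega$. If the soft functional-analytic argument above is judged too cavalier, a robust workaround is to further truncate $G_n$ by its Fourier partial sums $G_n^{(M)}=\sum_{|k|,|l|\le M}(G_n)_{k,l}\,\tilde e_k\otimes\tilde e_l$: the truncated form $\<\omega\otimes\omega,G_n^{(M)}\>$ is a finite polynomial in the modes $\<\omega,\tilde e_k\>$, each a continuous linear functional on $H^{-1-}(\T^2)$, so its convergence under $\tilde\omega_N\to\tilde\omega$ is manifest, and the tail $G_n-G_n^{(M)}$ is controlled uniformly in $N$ via \eqref{point-vortices} modulo the harmless diagonal correction $\int_{\T^2}(G_n-G_n^{(M)})(x,x)\,\d x\to 0$, which follows from the uniform convergence of the Fourier series of the smooth kernel $G_n$.
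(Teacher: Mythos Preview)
Your proposal is correct and follows essentially the same approach as the paper: the same three-term triangle-inequality split using the smooth approximants $G_n$, the same $L^2$ bounds \eqref{point-vortices} and \eqref{eq-1} (plus Cauchy--Schwarz) for the outer terms, and the same a.s.\ convergence plus uniform integrability (from the $L^2$ bound) for the middle term, concluded by the same $N\to\infty$ then $n\to\infty$ limit. Your extra discussion of the identification issue and the Fourier-truncation workaround is more careful than the paper, which simply asserts the a.s.\ convergence of $\<\tilde\omega_N\otimes\tilde\omega_N,G_n\>$ from that of $\tilde\omega_N$ without comment.
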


\begin{proof}
Denote the expectation by $I_N$. Let $G_n$ be the approximating functions given at the end of Section \ref{sec-2.1}. By the triangle inequality,
  \begin{equation}\label{sec-weak-1}
  \aligned I_N \leq &\, \tilde \E \big|\<\tilde\omega_N \otimes \tilde\omega_N, G\> - \<\tilde\omega_N \otimes \tilde\omega_N, G_n\> \big| + \tilde \E \big| \<\tilde\omega_N \otimes \tilde\omega_N, G_n\> - \<\tilde\omega \otimes \tilde\omega, G_n\> \big| \\
  &+ \tilde \E \big| \<\tilde\omega \otimes \tilde\omega, G_n\> - \<\tilde\omega \otimes \tilde\omega, G\> \big|.
  \endaligned
  \end{equation}
We denote the three terms by $I_{N,i},\, i=1,2,3$. Cauchy's inequality yields
  $$I_{N,1} \leq \Big(\tilde \E \big|\<\tilde\omega_N \otimes \tilde\omega_N, G\> - \<\tilde\omega_N \otimes \tilde\omega_N, G_n\> \big|^2 \Big)^{1/2} = \bigg(2\int_{\T^2\times \T^2} (G-G_n)^2(x,y) \,\d x\d y\bigg)^{1/2},$$
where in the second step we have used \eqref{point-vortices}. Similarly, by \eqref{eq-1},
  $$I_{N,3} \leq \bigg(2\int_{\T^2\times \T^2} (G-G_n)^2(x,y) \,\d x\d y\bigg)^{1/2}.$$
Next, for any fixed $n\geq 1$, the family $\big\{ \<\tilde\omega_N \otimes \tilde\omega_N, G_n\> \big\}_{N\geq 1}$ is bounded in $L^2\big(\tilde\P \big)$ by \eqref{point-vortices}, hence it is uniformly integrable. Moreover, $\tilde\P$-a.s.,
  $$\<\tilde\omega_N \otimes \tilde\omega_N, G_n\> \to \<\tilde\omega \otimes \tilde\omega, G_n\> \quad \mbox{as } N\to \infty, $$
due to the a.s. convergence of $\tilde\omega_N$ to $\tilde\omega$. Therefore,
  $$\lim_{N\to \infty} I_{N,2} =0.$$
Summarizing the above discussions, we first let $N\to \infty$ and then $n\to \infty$ in \eqref{sec-weak-1} to deduce that $\lim_{N\to \infty} I_N =0$.
\end{proof}

As a consequence, we can prove

\begin{corollary}\label{3-cor}
For any nontrivial interval $[a,b]$, one has
  \begin{equation}\label{eq-2}
  \lim_{N\to \infty} \P(\{\mathcal H_N \in [a,b]\}) = \P(\{\H \in [a,b]\}).
  \end{equation}
\end{corollary}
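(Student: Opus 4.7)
The plan is to derive this as a direct consequence of Proposition \ref{prop-weak} via the Portmanteau theorem. Weak convergence of real-valued random variables $\mathcal H_N \Rightarrow \H$ implies $\P(\{\mathcal H_N \in B\}) \to \P(\{\H \in B\})$ for every Borel set $B\subset \R$ satisfying $\P(\{\H \in \partial B\}) = 0$. Applied to $B=[a,b]$, whose topological boundary is $\{a,b\}$, the task reduces to verifying that the distribution of $\H$ puts no mass on the two points $a$ and $b$. I would in fact establish the stronger statement that the law $\mathcal L(\H)$ is absolutely continuous with respect to Lebesgue measure.

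For the atomless property, I would exploit the explicit series representation given by Proposition \ref{prop-2}, namely
\[
\H = \frac{1}{8\pi^2} \sum_{k\in \Z_0^2} \frac{1}{|k|^2}\bigl( \<\omega, e_k\>^2 - 1 \bigr),
\]
with the family $\{\<\omega, e_k\>\}_{k\in \Z_0^2}$ being i.i.d.\ standard Gaussian and the series converging in $L^2$. Fixing any $k_0\in \Z_0^2$, I would split $\H = X_{k_0} + R_{k_0}$, where $X_{k_0} = \frac{1}{8\pi^2 |k_0|^2}\bigl(\<\omega, e_{k_0}\>^2 - 1\bigr)$ is an affine image of a $\chi^2_1$ variable, hence admits a density on $\R$, and $R_{k_0}$ is the $L^2$ limit of the remaining partial sums, built from the independent family $\{\<\omega, e_k\>\}_{k\neq k_0}$ and therefore independent of $X_{k_0}$. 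Since $\mathcal L(\H) = \mathcal L(X_{k_0}) * \mathcal L(R_{k_0})$ is a convolution in which one factor is absolutely continuous, $\mathcal L(\H)$ itself is absolutely continuous. In particular, $\P(\{\H = a\}) = \P(\{\H = b\}) = 0$, so $[a,b]$ is a continuity set for $\mathcal L(\H)$ and the Portmanteau theorem delivers \eqref{eq-2}.

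The only substantive ingredient is Proposition \ref{prop-weak}; once weak convergence is available, the absence of atoms at $a$ and $b$ is a short independence-plus-convolution argument. There is no real obstacle here: the remark preceding the corollary (``Thanks to the fact that the distribution of $\H$ has a density\ldots'') confirms that this density/atomless observation is precisely the intended route.
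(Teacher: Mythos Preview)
Your proof is correct and follows the same overall architecture as the paper: both invoke Proposition~\ref{prop-weak} and the Portmanteau theorem, reducing the problem to showing that $[a,b]$ is a continuity set for the law of $\H$, which in turn follows from absolute continuity of that law.

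The one genuine difference lies in how absolute continuity is obtained. The paper appeals to Malliavin calculus, citing results of Malliavin and Cipriano to the effect that $\H$ is a smooth, nondegenerate Wiener functional and hence has a density. Your argument is more elementary and fully self-contained: you use the explicit series from Proposition~\ref{prop-2}, peel off a single term $X_{k_0}$ (an affine image of a $\chi^2_1$ variable, hence absolutely continuous), observe it is independent of the remainder $R_{k_0}$, and conclude via convolution. Your route avoids external machinery and gives exactly what is needed here; the Malliavin approach yields more (smoothness of the density) but at the cost of heavier references. Either is perfectly adequate for the corollary.
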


\begin{proof}
By \cite[Theorem 8.3]{Malliavin} (see also \cite[Theorems 3.2 and 3.3]{Cipriano}), the renormalized energy $\H$ is infinitely differentiable in the sense of Malliavin and it is non-degenerate, which implies that, as a real valued random variable, the distribution $\nu$ of $\H$ has a density w.r.t. the Lebesgue measure on $\R$. Thus any interval $[a,b]$ is a $\nu$-continuous set, that is, the boundary of $[a,b]$ (i.e. $\{a,b\}$) is $\nu$-negligible. On the other hand, Proposition \ref{prop-weak} tells us that the distributions on $\R$ of $\mathcal H_N$ converge weakly to $\nu$ as $N\to\infty$. Therefore, the desired limit holds true.
\end{proof}

Finally we are ready to prove the main result.

\begin{proof}[Proof of Theorem \ref{main-thm}]
Taking into account Proposition \ref{prop-support} and Corollary \ref{3-cor}, it is sufficient to show that, for any bounded and uniformly continuous function $F:H^{-1-}(\T^2) \to \R$, one has
  $$\lim_{N\to \infty} \E \big[ F(\omega_N) {\bf 1}_{[a,b]}(\mathcal H_N) \big]= \E \big[ F(\omega) {\bf 1}_{[a,b]}(\H) \big],$$
where $\omega_N$ and $\omega$ denote the random point vortices and the white noise respectively.

We follow the idea of the last subsection and use the Skorokhod theorem. Then, adopting the notations given there,
  $$\E \big[ F(\omega_N) {\bf 1}_{[a,b]}(\mathcal H_N) \big] - \E \big[ F(\omega) {\bf 1}_{[a,b]}(\H) \big] = \tilde\E \big[ F(\tilde\omega_N) {\bf 1}_{[a,b]}(\tilde{\mathcal H}_N) \big] - \tilde\E \big[ F(\tilde\omega) {\bf 1}_{[a,b]}(\tilde{\H}) \big], $$
where the notations with a tilde denote quantities on the new probability space $\big( \tilde\Theta, \tilde{\mathcal F}, \tilde\P \big)$. Denote by $J_N$ the difference on the right hand side; then,
  $$|J_N| \leq \tilde\E |F(\tilde\omega_N) - F(\tilde\omega)| + \|F\|_\infty \tilde\E \big|{\bf 1}_{[a,b]}(\tilde{\mathcal H}_N) - {\bf 1}_{[a,b]}(\tilde{\H}) \big|.$$
The first term tends to zero by the dominated convergence theorem and the $\tilde\P$-a.s. convergence of $\tilde \omega_N$ to $\tilde \omega$. To show that the second one also vanishes as $N\to \infty$, we take a sequence of bounded continuous functions such that:
  $$ f_n(t) = \begin{cases}
  1, & t\in [a, b]; \\
  0, & t\in (-\infty, a-1/n ] \cup [b+ 1/n, +\infty ); \\
  \mbox{linear function}, & t\in [a-1/n, a ] \cup [b, b+ 1/n].
  \end{cases} $$
Then,
  \begin{equation}\label{proof-1}
  \aligned &\, \tilde\E \big|{\bf 1}_{[a,b]}(\tilde{\mathcal H}_N) - {\bf 1}_{[a,b]}(\tilde{\H}) \big| \\
  \leq &\, \tilde\E \big|{\bf 1}_{[a,b]}(\tilde{\mathcal H}_N) - f_n(\tilde{\mathcal H}_N)\big| + \tilde\E \big|f_n(\tilde{\mathcal H}_N) - f_n(\tilde{\H}) \big| + \tilde\E \big| f_n(\tilde{\H}) - {\bf 1}_{[a,b]}(\tilde{\H}) \big|.
  \endaligned
  \end{equation}
By Lemma \ref{lem-3-1}, we know that $\tilde{\mathcal H}_N = -\frac12 \<\tilde\omega_N \otimes \tilde\omega_N, G\>$ converge in $L^1\big(\tilde\P \big)$ to $\tilde{\H} = -\frac12 \<\tilde\omega \otimes \tilde\omega, G\>$. For fixed $n\in \N$, $f_n$ is Lipschitz continuous with $\|f_n\|_{\rm Lip} =n$; therefore,
  $$\lim_{N\to \infty} \tilde\E \big|f_n(\tilde{\mathcal H}_N) - f_n(\tilde{\H}) \big|=0.$$
Next, let $\nu_N$ be the distribution of $\mathcal H_N$, and thus also of $\tilde{\mathcal H}_N$. We have
  $$\tilde\E \big|{\bf 1}_{[a,b]}(\tilde{\mathcal H}_N) - f_n(\tilde{\mathcal H}_N)\big| \leq \nu_N\big([a-1/n, a ] \cup [b, b+ 1/n ] \big),$$
hence, by Proposition \ref{prop-weak},
  $$\limsup_{N\to \infty} \E \big|{\bf 1}_{[a,b]}(\tilde{\mathcal H}_N) - f_n(\tilde{\mathcal H}_N)\big| \leq \nu\big([a-1/n, a ] \cup [b, b+ 1/n ] \big),$$
where $\nu$ is the distribution of $\H$ which is the same as that of $\tilde{\H}$. Finally,
  $$\tilde\E \big|f_n(\tilde{\H}) - {\bf 1}_{[a,b]}(\tilde{\H}) \big| \leq \nu\big([a-1/n, a ] \cup [b, b+ 1/n ] \big).$$
Recall that $\nu$ is absolutely continuous w.r.t. the Lebesgue measure. Therefore, first letting $N\to \infty$ and then $n\to \infty$ in \eqref{proof-1}, we complete the proof.
\end{proof}

\section{Triviality of cluster points}

In this part, following the discussions at the end of \cite[Section 5.4]{Lions} (see also \cite{Neri}), we study the limit behavior of the correlation functions (i.e. marginal distributions) of the energy conditional measures $\lambda_N^{a,b}$ on the ``flat space'' $(\R\times \T^2)^N$. Here, for $a,b\in \R,\, a<b$,
  \begin{equation}\label{flat-cond-measure}
  \lambda_N^{a,b} = \frac1{Z_N^{a,b}} {\bf 1}_{\{\mathcal H_N((\xi_1,x_1), \ldots, (\xi_N,x_N))\in [a,b]\}} \lambda_N,
  \end{equation}
where $\lambda_N$ is defined in \eqref{invariant-meas} and $Z_N^{a,b}$ is the normalizing constant:
  $$Z_N^{a,b}= \int_{(\R\times \T^2)^N} {\bf 1}_{\{\mathcal H_N((\xi_1,x_1), \ldots, (\xi_N,x_N))\in [a,b] \}} \,\d \lambda_N =\P(\{\mathcal H_N \in [a,b]\}).$$
Note that the measure $\mu_N^{a,b}$ defined in the introduction is the image measure of $\lambda_N^{a,b}$ under the map $\mathcal T_N: (\R\times \T^2)^N \to H^{-1-}(\T^2) $ defined as
  \begin{equation}\label{map}
  \big((\xi_1, x_1), \ldots, (\xi_N, x_N)\big) \stackrel{\mathcal T_N}{\longrightarrow} \frac1{\sqrt N} \sum_{i=1}^N \xi_i \delta_{x_i}.
  \end{equation}

We introduce the notations $\tilde x_i=(\xi_i, x_i)\in \R\times \T^2 $ and $X_N= (\tilde x_1, \ldots, \tilde x_N)$. Denote by $\d \tilde x_i = \d x_i \mathcal N(\d \xi_i)$ and $\d X_N= \d \tilde x_1 \ldots \d \tilde x_N$. Let
  $$\rho^N(X_N)= {\bf 1}_{\{ \mathcal H_N(X_N)\in [a,b] \}} /Z_N^{a,b}$$
be the density function of the conditional probability measure $\lambda_N^{a,b}$ on $(\R\times \T^2)^N$. The \emph{correlation functions} $\rho^N_j\, (1\leq j\leq N)$ are defined as follows: $\rho^N_N= \rho^N$ and for $1\leq j\leq N-1$,
  $$\rho^N_j(\tilde x_1, \ldots, \tilde x_j)= \int_{(\R\times \T^2)^{N-j}} \rho^N(X_N) \,\d \tilde x_{j+1} \ldots \d \tilde x_N. $$
Each function is a probability density (for the first $j$ point vortices) and is symmetric in $(\tilde x_1, \ldots, \tilde x_j)$, thanks to the symmetry of $\rho^N$ in $(\tilde x_1, \ldots, \tilde x_N)$. To simplify the notations, we introduce
  $$X_j= (\tilde x_1, \ldots, \tilde x_j), \quad X^{N-j}= (\tilde x_{j+1}, \ldots, \tilde x_N), \quad 1\leq j\leq N-1.$$

First of all, we have the following simple result (see \cite[(22)]{Lions} or \cite[Proposition 6]{Neri}).

\begin{lemma}\label{lem-4-1}
For any $1\leq j\leq N-1$,
  $$\int_{(\R\times \T^2)^N} \rho^N \log \rho^N\,\d X_N \geq \int_{(\R\times \T^2)^j} \rho^N_j \log \rho^N_j\,\d X_j + \int_{(\R\times \T^2)^{N-j}} \rho^N_{N-j} \log \rho^N_{N-j}\,\d X^{N-j}.$$
\end{lemma}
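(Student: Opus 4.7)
The statement is the classical subadditivity of Shannon (differential) entropy for a joint density split into two blocks, and the plan is to derive it from the non-negativity of relative entropy (equivalently, from Jensen's inequality applied to the convex function $t \log t$).

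First, I would exploit the symmetry of $\rho^N(X_N)$ in the $N$ arguments $\tilde x_1,\dots,\tilde x_N$ to identify the ``other'' block marginal. Integrating $\rho^N$ over the first $j$ variables yields, after relabeling, the same function as integrating over the last $j$ variables, so
$$\int_{(\R\times\T^2)^j} \rho^N(X_N)\,\d X_j \;=\; \rho^N_{N-j}(X^{N-j}).$$
This identifies the marginal of $\rho^N$ on $X^{N-j}$ with $\rho^N_{N-j}$ evaluated at $(\tilde x_{j+1},\dots,\tilde x_N)$.

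Second, I would form the candidate product density $q(X_N) := \rho^N_j(X_j)\,\rho^N_{N-j}(X^{N-j})$ on $(\R\times\T^2)^N$, which is indeed a probability density by the first step and Fubini. Non-negativity of the relative entropy of $\rho^N$ with respect to $q$ gives
$$\int \rho^N \log \rho^N\,\d X_N \;\geq\; \int \rho^N \log q\,\d X_N \;=\; \int \rho^N \log \rho^N_j\,\d X_N + \int \rho^N \log \rho^N_{N-j}\,\d X_N.$$

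Third, applying Fubini separately to each term on the right, together with the definition of $\rho^N_j$ as the marginal over $X^{N-j}$ and the identification from the first step, one obtains
$$\int \rho^N \log \rho^N_j \,\d X_N = \int \rho^N_j \log \rho^N_j\,\d X_j, \qquad \int \rho^N \log \rho^N_{N-j}\,\d X_N = \int \rho^N_{N-j} \log \rho^N_{N-j}\,\d X^{N-j}.$$
Inserting these into the previous inequality yields the claim.

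I do not anticipate any real obstacle: the only verification is that all integrals are well-defined, which is immediate because $\rho^N = (Z_N^{a,b})^{-1}\mathbf{1}_{\{\mathcal H_N\in[a,b]\}}$ is bounded (so $\int \rho^N \log \rho^N\,\d X_N = -\log Z_N^{a,b}$ is finite with the convention $0\log 0 = 0$), and the block marginals $\rho^N_j,\rho^N_{N-j}$ are genuine probability densities so the ``cross'' terms $\int \rho^N \log \rho^N_j\,\d X_N$ are also finite by standard bounds on $t\log t$.
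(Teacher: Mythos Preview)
Your proof is correct and follows essentially the same route as the paper: both establish the non-negativity of the relative entropy of $\rho^N$ with respect to the product $\rho^N_j(X_j)\rho^N_{N-j}(X^{N-j})$ (the paper derives it pointwise from $t\log t \geq t-1$, you invoke it directly), then expand the logarithm and reduce the cross terms via the marginal identities. Your additional remarks on symmetry and integrability are valid and slightly more explicit than the paper's, but the argument is the same.
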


\begin{proof}
We include the proof for the reader's convenience. It is well known that $t\log t \geq t -1$ for all $t\geq 0$. Therefore,
  $$\rho^N(X_N) \log\bigg(\frac{\rho^N(X_N)}{\rho^N_j(X_j) \rho^N_{N-j}(X^{N-j})} \bigg) + \rho^N_j(X_j) \rho^N_{N-j}(X^{N-j}) -\rho^N(X_N) \geq 0,$$
which implies
  $$\int_{(\R\times \T^2)^N} \rho^N(X_N) \log\bigg(\frac{\rho^N(X_N)}{\rho^N_j(X_j) \rho^N_{N-j}(X^{N-j})} \bigg) \,\d X_N \geq 0.$$
This is equivalent to the desired inequality.
\end{proof}

Next, we prove

\begin{proposition}\label{prop-3}
For any fixed $j\geq 1$, $\big\{\rho^N_j \big\}_{N\geq j}$ is weakly compact in $L^1\big( (\R\times \T^2)^j, \d X_j \big)$.
\end{proposition}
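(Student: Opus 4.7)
My plan is to use the Dunford--Pettis theorem: since each $\rho^N_j$ is a probability density with respect to the probability measure $dX_j$, the family is automatically bounded in $L^1$, so weak compactness reduces to uniform integrability. To verify uniform integrability I will invoke the de la Vall\'ee Poussin criterion with the superlinear function $\Phi(t)= t\log t$ (extended by continuity at $0$), and show
$$\sup_{N\geq j} \int_{(\R\times\T^2)^j} \rho^N_j \log \rho^N_j\,\d X_j < \infty.$$
Note that because $dX_j$ is a probability measure and $t\log t\geq -1/e$, this quantity is automatically bounded below, so the real content is an upper bound.

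The upper bound will come from two ingredients already available. First, by Lemma \ref{lem-4-1},
$$\int_{(\R\times\T^2)^j} \rho^N_j \log \rho^N_j\,\d X_j \leq \int_{(\R\times \T^2)^N} \rho^N \log \rho^N\,\d X_N - \int_{(\R\times \T^2)^{N-j}} \rho^N_{N-j} \log \rho^N_{N-j}\,\d X^{N-j},$$
and the subtracted term is itself $\geq -1/e$ (in fact $\geq 0$ by Jensen's inequality applied to the probability density $\rho^N_{N-j}$). So it suffices to bound the full entropy $\int \rho^N \log \rho^N\,\d X_N$ from above. Second, because $\rho^N = \mathbf{1}_{\{\mathcal H_N\in[a,b]\}}/Z_N^{a,b}$ is the density of a uniform probability on the set $\{\mathcal H_N\in[a,b]\}$, an explicit computation yields
$$\int_{(\R\times \T^2)^N} \rho^N \log \rho^N\,\d X_N = -\log Z_N^{a,b}.$$

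Now Corollary \ref{3-cor} gives $Z_N^{a,b} \to \mu(\{\H\in[a,b]\})>0$, so $\sup_N(-\log Z_N^{a,b})<\infty$. Combining the three displays above yields the desired uniform entropy bound on $\{\rho^N_j\}_{N\geq j}$, and hence uniform integrability; Dunford--Pettis then finishes the proof. The only mild subtlety I foresee is in justifying the elementary inequality and the sign of $\int \rho^N_{N-j}\log\rho^N_{N-j}\,\d X^{N-j}$ (a direct Jensen argument using that $t\log t$ is convex and $dX^{N-j}$ is a probability measure), but nothing here is computationally heavy; the conceptual backbone is simply ``entropy $=-\log Z_N^{a,b}$ is bounded, entropy dominates marginal entropy, and bounded entropy $\Rightarrow$ uniform integrability.''
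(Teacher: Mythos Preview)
Your proof is correct and follows the same strategy as the paper: bound the marginal entropy $\int \rho^N_j \log \rho^N_j$ via Lemma~\ref{lem-4-1}, compute the full entropy explicitly as $-\log Z_N^{a,b}$, and conclude uniform integrability (de la Vall\'ee Poussin) and hence weak $L^1$-compactness (Dunford--Pettis). The one difference worth noting is that you apply Lemma~\ref{lem-4-1} once to obtain $\int \rho^N_j \log \rho^N_j \leq -\log Z_N^{a,b}$, which is merely bounded; the paper instead writes $N = mj + r$ and iterates the superadditivity to get the sharper estimate
\[
\int \rho^N_j \log \rho^N_j \leq \frac{1}{m}\big(-\log Z_N^{a,b}\big) \longrightarrow 0.
\]
For Proposition~\ref{prop-3} as stated your bound is enough, but be aware that the paper's stronger conclusion---that the marginal entropies actually vanish as $N\to\infty$---is precisely what is invoked in the proof of Theorem~\ref{thm-triviality}, where one needs $\rho^{N_k}_j \in \mathcal C_\eps$ for \emph{every} $\eps>0$. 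So if you adopt your simpler argument here, you will need to recover that vanishing separately when you reach the next theorem.
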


\begin{proof}
For any $N\geq j$, there exist $m= m(j,N)\in \mathbb N$ and $r=r(j,N)\in \{0,1,\ldots, j-1\}$ such that $N= mj+r$. By Lemma \ref{lem-4-1},
  $$\int_{(\R\times \T^2)^N} \rho^N \log \rho^N\,\d X_N \geq m \int_{(\R\times \T^2)^j} \rho^N_j \log \rho^N_j\,\d X_j + r \int_{\R\times \T^2} \rho^N_1 \log \rho^N_1\,\d X_1.$$
Using the inequality $t\log t \geq t -1\ (t\geq 0)$, it is clear that $\int_{\R\times \T^2} \rho^N_1 \log \rho^N_1\,\d X_1\geq 0$. Thus
  $$\int_{(\R\times \T^2)^j} \rho^N_j \log \rho^N_j\,\d X_j \leq \frac1m \int_{(\R\times \T^2)^N} \rho^N \log \rho^N\,\d X_N = \frac1m \log \frac1{Z_N^{a,b}},$$
where the last step follows from the definition of $\rho^N$. Noticing that $\frac1m = O\big(\frac1N \big)$, thus by \eqref{eq-2}, the right hand side vanishes as $N\to \infty$. In particular, we conclude that $\big\{\rho^N_j \log \rho^N_j \big\}_{N\geq j}$ is bounded in $L^1\big( (\R\times \T^2)^j, \d X_j \big)$. The proof is complete.
\end{proof}

We say that a family $\{\rho_j\}_{j\geq 1}$ of probability densities is a weak cluster point of $\big\{\rho^N_j \big\}_{j\geq 1}$ if there exists a subsequence $\{N_k\}_{k\geq 1}$ of integers, such that, for any $j\geq 1$, $\rho^{N_k}_j$ converge weakly to $\rho_j$ in $L^1\big( (\R\times \T^2)^j, \d X_j \big)$. Now we prove the main result of this section.

\begin{theorem}\label{thm-triviality}
Any weak cluster point $\{\rho_j\}_{j\geq 1}$ of $\big\{\rho^N_j \big\}_{j\geq 1}$ is trivial, that is, for any $j\geq 1$, $\rho_j = 1$ almost surely on $(\R\times \T^2)^j$. Consequently, for any $j\geq 1$, the whole sequence $\big\{\rho^N_j \big\}_{N\geq j}$ converge weakly to 1.
\end{theorem}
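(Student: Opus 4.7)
The plan is to show that the relative entropy of each finite-particle marginal $\rho^N_j$ with respect to the product probability reference measure $\d X_j = \prod_{i=1}^j \d x_i\, \mathcal N(\d\xi_i)$ tends to zero as $N\to\infty$, and then to invoke weak lower semicontinuity of the entropy functional to deduce that every weak $L^1$ cluster point $\rho_j$ must have zero entropy, hence equals $1$ almost everywhere. Combined with the weak $L^1$ compactness already established in Proposition \ref{prop-3}, a subsequence-of-subsequence argument then upgrades this to weak convergence of the full sequence to $1$.

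The first ingredient is already inside the proof of Proposition \ref{prop-3}: the entropy bound
$$\int_{(\R\times\T^2)^j} \rho^N_j \log\rho^N_j \,\d X_j \leq \frac{1}{m}\,\log \frac{1}{Z_N^{a,b}}, \qquad m = \lfloor N/j\rfloor,$$
is a direct consequence of the sub-additivity of Lemma \ref{lem-4-1}. By Corollary \ref{3-cor} together with Proposition \ref{prop-support}, $Z_N^{a,b}\to \mu(\{\H \in [a,b]\})>0$, so $\log(1/Z_N^{a,b})$ is bounded in $N$; since $m\to\infty$ as $N\to\infty$, the right-hand side vanishes.

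The second ingredient is weak lower semicontinuity. On densities with respect to a probability reference measure, the functional $\rho \mapsto \int \rho \log\rho\,\d X_j$ is convex, non-negative (via the pointwise bound $t\log t \geq t-1$), and weakly $L^1$ lower semicontinuous; one clean way to see the last property is via the Donsker--Varadhan dual formula $\int \rho \log\rho \,\d X_j = \sup_{\phi \in C_b}\bigl\{\int \phi\,\rho\,\d X_j - \log \int {\rm e}^{\phi}\,\d X_j \bigr\}$, which exhibits the entropy as a supremum of affine functionals continuous under weak $L^1$ convergence. Given any weak cluster point $\rho_j$ of a subsequence $\rho^{N_k}_j$, we then obtain $0 \leq \int \rho_j\log\rho_j\,\d X_j \leq \liminf_k \int \rho^{N_k}_j \log\rho^{N_k}_j\,\d X_j = 0$, and the equality case in Jensen's inequality forces $\rho_j \equiv 1$ a.e. Since Proposition \ref{prop-3} ensures that every subsequence of $\{\rho^N_j\}_{N\geq j}$ admits a further weakly convergent subsequence whose limit must therefore be $1$, the whole sequence converges weakly to $1$.

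The only real subtlety is the weak $L^1$ lower semicontinuity of the entropy on the non-compact phase space $(\R\times\T^2)^j$; this is classical but must be cited cleanly (Donsker--Varadhan duality, or convexity combined with Mazur's lemma, both suffice). Beyond this standard step, the argument is transparent, because the key quantitative input --- the boundedness of $\log(1/Z_N^{a,b})$ in $N$ --- is handed to us by Corollary \ref{3-cor} and Proposition \ref{prop-support}, and the rest is a one-line manipulation of the entropy estimate.
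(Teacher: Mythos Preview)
Your proposal is correct and follows essentially the same route as the paper: the entropy bound $\int \rho^N_j\log\rho^N_j\,\d X_j \leq \frac{1}{m}\log(1/Z_N^{a,b})\to 0$ from Proposition \ref{prop-3}, weak $L^1$ lower semicontinuity of the entropy, and then the equality case forcing $\rho_j\equiv 1$. The only cosmetic difference is in how the lower semicontinuity is justified: the paper argues via Mazur's lemma (the convex level set $\mathcal C_\eps=\{u\geq 0:\int u\log u\leq\eps\}$ has coinciding weak and strong closures) followed by Fatou's lemma along a strongly convergent sequence, whereas you invoke the Donsker--Varadhan variational formula; both are standard and equivalent here.
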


\begin{proof}
Fix any $\eps>0$ and $j\geq 1$; let
  $$\mathcal C_\eps = \bigg\{ u\in L^1\big( (\R\times \T^2)^j, \d X_j \big): u \geq 0, \int_{(\R\times \T^2)^j} u\log u\,\d X_j \leq \eps \bigg\}.$$
Let $\{N_k\}_{k\geq 1}$ be the subsequence such that $\rho^{N_k}_j$ converge weakly to $\rho_j$ in $L^1\big( (\R\times \T^2)^j, \d X_j \big)$. By the proof of Proposition \ref{prop-3}, we have $\rho^{N_k}_j \in \mathcal C_\eps$ for all $k$ big enough. Therefore, $\rho_j$ is a weak cluster point of $\mathcal C_\eps$, which is a convex subset of $L^1\big( (\R\times \T^2)^j, \d X_j \big)$. Since the weak closure of $\mathcal C_\eps$ coincides with the strong one, there exists a sequence of functions $\{u_n \}\subset \mathcal C_\eps$ which converge strongly to $\rho_j$ in $L^1\big( (\R\times \T^2)^j, \d X_j \big)$. Along a subsequence, $u_n$ converge to $\rho_j$ almost everywhere. Therefore, by Fatou's lemma, we have
  $$\int_{(\R\times \T^2)^j} \rho_j\log \rho_j\, \d X_j \leq \liminf_{n\to\infty} \int_{(\R\times \T^2)^j} u_n\log u_n\, \d X_j \leq \eps.$$
The arbitrariness of $\eps>0$ leads to $\int_{(\R\times \T^2)^j} \rho_j\log \rho_j\, \d X_j =0$, which implies $\rho_j = 1$ almost surely. The last assertion follows from the weak compactness of $\big\{\rho^N_j \big\}_{N\geq j}$ and the uniqueness of the weak limit.
\end{proof}

Theorem \ref{thm-triviality} is a propagation-of-chaos type result, which means, in the limit, the joint distribution $\rho_j$ of $j$ point vortices is the $j$-th power of that of one point vortex. The weak cluster point obtained above gives a trivial solution to the mean field equation
  $$\rho(\xi,x)= \frac1{Z_\beta} e^{-\beta \xi U_\rho (x)},\quad \beta\in \R,$$
where $Z_\beta$ is the normalizing constant and $U_\rho$ is the averaged stream function:
  $$U_\rho (x)= \int_{\R\times \T^2} \xi G(x,y) \rho(\xi, y)\, \mathcal N(\d\xi) \d y,\quad x\in \T^2.$$
In our case, $\rho=1$ a.s. and $U_\rho =0$ a.s. The corresponding free energy $F(1)= S(1) + \beta E(1)=0$, where the entropy
  $$S(\rho)= \int_{\R\times \T^2} \rho(\tilde x)\log \rho(\tilde x) \,\d\tilde x$$
and the energy
  $$E(\rho)= \int_{(\R\times \T^2)^2} \mathcal H_2(\tilde x_1, \tilde x_2) \rho(\tilde x_1)\rho(\tilde x_2)\, \d \tilde x_1\d \tilde x_2. $$

We conclude this section by showing that, under the measure $\lambda_N^{a,b}= \rho^N(X_N)\,\d X_N$, the empirical measure $\frac1N \sum_{i=1}^N \delta_{\tilde x_i}$ converges weakly to the trivial measure $\d\tilde x = \mathcal N(\d\xi) \d x$ on $\R\times\T^2$.

\begin{corollary}
For any $\phi\in C_b(\R\times \T^2)$,
  $$\lim_{N\to\infty} \int_{(\R\times \T^2)^N} \bigg|\frac1N \sum_{i=1}^N \phi(\tilde x_i) - \int_{\R\times \T^2} \phi(\tilde x) \,\d\tilde x\bigg|^2 \,\d \lambda_N^{a,b}=0.$$
\end{corollary}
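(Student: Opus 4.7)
The plan is to expand the square, identify each resulting integral with a one- or two-point correlation function, and then apply Theorem \ref{thm-triviality}. Setting $\bar\phi:=\int_{\R\times\T^2}\phi(\tilde x)\,\d\tilde x$ and expanding the square pointwise gives
$$\bigg|\frac1N\sum_{i=1}^N\phi(\tilde x_i)-\bar\phi\bigg|^2 = \frac1{N^2}\sum_{i=1}^N\phi(\tilde x_i)^2 + \frac1{N^2}\sum_{i\neq j}\phi(\tilde x_i)\phi(\tilde x_j) - \frac{2\bar\phi}{N}\sum_{i=1}^N\phi(\tilde x_i) + \bar\phi^2.$$
Integrating against $\lambda_N^{a,b}=\rho^N(X_N)\,\d X_N$ and using the symmetry of $\rho^N$ in its $N$ arguments, the four pieces become, respectively,
$$\frac1N\!\int_{\R\times\T^2}\!\phi^2\,\rho^N_1\,\d\tilde x_1,\quad \frac{N-1}{N}\!\int_{(\R\times\T^2)^2}\!\phi(\tilde x_1)\phi(\tilde x_2)\,\rho^N_2\,\d\tilde x_1\d\tilde x_2,\quad -2\bar\phi\!\int_{\R\times\T^2}\!\phi\,\rho^N_1\,\d\tilde x_1,\quad \bar\phi^2.$$

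The first term is $O(1/N)$ and hence negligible: $\phi^2$ is bounded by $\|\phi\|_\infty^2$, while $\rho^N_1$ is a probability density on $\R\times\T^2$, so the integral is at most $\|\phi\|_\infty^2$. For the remaining terms, I invoke Theorem \ref{thm-triviality}, which gives $\rho^N_1\to 1$ weakly in $L^1(\R\times\T^2,\d\tilde x)$ and $\rho^N_2\to 1$ weakly in $L^1((\R\times\T^2)^2,\d\tilde x_1\d\tilde x_2)$. Since $\phi\in C_b(\R\times\T^2)$, both $\phi$ and the tensor product $\phi\otimes\phi$ are bounded (hence belong to the $L^\infty$ dual pairing with $L^1$); weak $L^1$ convergence therefore yields
$$\int_{\R\times\T^2}\phi\,\rho^N_1\,\d\tilde x_1\;\longrightarrow\;\bar\phi,\qquad \int_{(\R\times\T^2)^2}\phi(\tilde x_1)\phi(\tilde x_2)\,\rho^N_2\,\d\tilde x_1\d\tilde x_2\;\longrightarrow\;\bar\phi^2.$$

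Summing the four contributions gives $0 + \bar\phi^2 - 2\bar\phi^2 + \bar\phi^2 = 0$, which is precisely the claim. The argument is essentially bookkeeping once the propagation-of-chaos statement of Theorem \ref{thm-triviality} is in hand; there is no substantive obstacle beyond verifying that bounded continuous functions act correctly as dual test functions against the weak $L^1$ convergence of the correlation densities, which is immediate since $L^\infty$ is the dual of $L^1$.
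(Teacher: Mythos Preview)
Your proof is correct and follows essentially the same route as the paper: expand the square, reduce by symmetry to integrals against the one- and two-point correlation densities $\rho^N_1$ and $\rho^N_2$, and then invoke Theorem \ref{thm-triviality}. You spell out slightly more explicitly than the paper why the weak $L^1$ convergence suffices (testing against $L^\infty$), but the argument is the same.
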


\begin{proof}
We denote by $I_N$ the integral. Expanding the square in the integral leads to
  $$\aligned
  I_N = &\, \frac1{N^2} \sum_{i,j=1}^N \int_{(\R\times \T^2)^N} \phi(\tilde x_i) \phi(\tilde x_j) \,\d \lambda_N^{a,b} +\bigg(\int_{\R\times \T^2} \phi(\tilde x) \,\d\tilde x \bigg)^2 \\
  & - \frac2N \bigg(\int_{\R\times \T^2} \phi(\tilde x) \,\d\tilde x \bigg) \sum_{i=1}^N \int_{(\R\times \T^2)^N} \phi(\tilde x_i) \,\d \lambda_N^{a,b}.
  \endaligned$$
Note that $\lambda_N^{a,b}= \rho^N(X_N)\,\d X_N$. Using the marginal densities $\rho^N_j,\, j=1,2$, we have
  $$\aligned
  I_N =&\, \frac{N-1}{N} \int_{(\R\times \T^2)^2} \phi(\tilde x_1) \phi(\tilde x_2) \rho^N_2(\tilde x_1, \tilde x_2)\,\d \tilde x_1\d \tilde x_2 + \frac{1}{N} \int_{\R\times \T^2} \phi(\tilde x_1)^2 \rho^N_1(\tilde x_1)\,\d \tilde x_1 \\
  &\, + \bigg(\int_{\R\times \T^2} \phi(\tilde x) \,\d\tilde x \bigg)^2 - 2 \bigg(\int_{\R\times \T^2} \phi(\tilde x) \,\d\tilde x \bigg) \int_{\R\times \T^2} \phi(\tilde x_1) \rho^N_1(\tilde x_1)\,\d \tilde x_1 .
  \endaligned$$
Now we finish the proof by letting $N\to\infty$ and using Theorem \ref{thm-triviality}.
\end{proof}

\section{Energy conditional solutions to 2D Euler equations}

In this part we show the existence of solutions to 2D Euler equations whose renormalized energy is confined in an interval $[a,b]$. First we give the precise meaning of the solution.

\begin{definition}\label{def}
Let $a,b\in \R, a<b$ be fixed. A stochastic process $\omega_\cdot$ defined on some probability space $(\Theta, \mathcal F, \P)$ with trajectories in $C\big( [0,T], H^{-1-}(\T^2)\big)$ is called an energy conditional solution of the 2D Euler equations if for any $t\in [0,T]$, $\omega_t$ has the distribution $\mu^{a,b}$ defined in \eqref{energy-cond-meas}, and for any $\phi\in C^\infty(\T^2)$, $\P$-a.s.,
   \begin{equation}\label{weak-vorticity}
   \<\omega_t, \phi\>= \<\omega_0, \phi\> + \int_0^t \<\omega_s\otimes \omega_s, H_\phi\>\,\d s \quad \mbox{for all } t\in [0,T].
   \end{equation}
\end{definition}

The above equation is called the weak vorticity formulation of the 2D Euler equation (see \cite{Schochet}). Here, for any $\phi\in C^\infty(\T^2)$,
  $$H_\phi(x,y)= \frac12 K(x-y)\cdot (\nabla\phi(x)- \nabla\phi(y)),\quad x,y\in \T^2, x\neq y,$$
in which $K$ is the Biot--Savart kernel on $\T^2$. We shall set $H_\phi(x,x)=0$ for all $x \in \T^2$. Note that $\mu^{a,b}$ is absolutely continuous with respect to the enstrophy measure $\mu$, with a density function bounded by $1/Z^{a,b}$, where $Z^{a,b}= \mu(\{\H \in [a,b]\}) >0$. Thus by \cite[Theorem 10 and Definition 11]{F1}, the nonlinear term $\<\omega_s\otimes \omega_s, H_\phi\>$ is well defined.

\begin{remark}
We recall that Cipriano showed in \cite[Theorem 4.1]{Cipriano} the existence of solutions to 2D Euler equations with given energy $a\in \R$, as long as the density function of $\H$ is positive at $a$. It is interesting to prove the same result by letting $b\to a$ in the above definition. The key ingredient is to show uniform estimates (independent of $a$ and $b$) of the type in Lemma \ref{lem-energy-solu} (without the parameter $N$). For the moment we do not know how to do this.
\end{remark}

Now we state our main result in this part.

\begin{theorem}\label{thm-energy-solu}
There exists a probability space $(\Theta,\mathcal F, \P)$ with the following properties.
\begin{itemize}
\item[\rm (i)] There exists a stochastic process $\omega:[0,T]\times \Theta\to H^{-1-}(\T^2)$ such that it is an energy conditional solution of the 2D Euler equations in the sense of Definition \ref{def}.
\item[\rm (ii)] On $(\Theta,\mathcal F, \P)$ one can define a subsequence of random point vortex system which converges $\P$-a.s. in $C\big( [0,T], H^{-1-}(\T^2)\big)$ to the solution of point {\rm (i)}.
\item[\rm (iii)] On $(\Theta,\mathcal F, \P)$ one can define a subsequence of functions $\omega^{(n)}(\theta,t,x),\, (\theta,t,x)\in \Theta \times [0,T]\times \T^2$, such that for $\P$-a.s. $\theta\in \Theta$, the functions $(t,x)\mapsto \omega^{(n)}(\theta,t,x)$ are $L^\infty$-solutions of 2D Euler equations, and converge to $\omega_\cdot(\theta)$ in $C\big( [0,T], H^{-1-}(\T^2)\big)$.
\end{itemize}
\end{theorem}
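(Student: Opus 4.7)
The plan is to follow the tightness/Skorokhod construction of $\mu$-invariant solutions in \cite[Sections 3--4]{F1}, with only one new ingredient: the uniform pointwise bound $\d\mu_N^{a,b}/\d\mu_N \le 1/Z_N^{a,b}$, which by Corollary~\ref{3-cor} converges to $1/Z^{a,b} < \infty$. This transfers every second-moment estimate from $\mu_N$ to $\mu_N^{a,b}$ (and from $\mu$ to $\mu^{a,b}$) up to a harmless multiplicative constant, so the $L^2$-renormalization machinery of \cite{F1} is available in the conditional setting without change.

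First I would note that the Hamiltonian point vortex ODE on $\T^{2N}$ (with intensities $\xi_i$ frozen as parameters) preserves Lebesgue measure and conserves $\mathcal H_N$; hence $\lambda_N^{a,b}$ in \eqref{flat-cond-measure} is invariant, and the induced process $\omega_N(\cdot)=\mathcal T_N(x_1(\cdot),\dots,x_N(\cdot))$ started at $\omega_N(0)\sim\mu_N^{a,b}$ pathwise satisfies the weak vorticity formulation \eqref{weak-vorticity} and has marginal $\mu_N^{a,b}$ at every $t\in[0,T]$. Then I would prove tightness of the laws $\mathbb P_N$ of $\omega_N(\cdot)$ on $C([0,T],H^{-1-}(\T^2))$: uniform moment bounds on $\|\omega_N(t)\|_{H^{-1-\delta}}$ are inherited from those under $\mu_N$ in \cite{F1} via the density bound above, and uniform H\"older-in-time moments follow from \eqref{weak-vorticity} together with
\[
\E_{\mu_N^{a,b}}\bigl[\<\omega_N\otimes\omega_N,H_\phi\>^2\bigr] \le 2\,(Z^{a,b})^{-1}\,\|H_\phi\|_{L^2(\T^2\times\T^2)}^2,
\]
which comes from \eqref{point-vortices} (recall $H_\phi(x,x)=0$) applied under $\mu_N$ and divided by $Z_N^{a,b}$.

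Extract a weakly convergent subsequence $\mathbb P_{N_k}\to\mathbb P$ and apply the Skorokhod representation to realize $\tilde\omega_{N_k}\to\omega$ almost surely in $C([0,T],H^{-1-}(\T^2))$ on a single probability space $(\Theta,\mathcal F,\P)$. By Theorem~\ref{main-thm}, $\mathrm{Law}(\omega(t))=\mu^{a,b}$ for each $t$. The main technical step is to identify $\omega$ as a weak vorticity solution, i.e.\ to pass to the limit in $\int_0^t\<\tilde\omega_{N_k}(r)\otimes\tilde\omega_{N_k}(r),H_\phi\>\,\d r$. This I would do by approximating $H_\phi$ by smooth symmetric $H_\phi^{(m)}$ vanishing on the diagonal, as in Section~\ref{sec-2.1}: for each fixed $m$ use the a.s.\ trajectory convergence of $\tilde\omega_{N_k}$ to $\omega$ together with pathwise continuity of the functional $\nu\mapsto\<\nu\otimes\nu,H_\phi^{(m)}\>$, then integrate in $r$ by dominated convergence; the error from replacing $H_\phi^{(m)}$ by $H_\phi$ is killed uniformly in $N_k$ by the uniform $L^2$-bound above together with $\|H_\phi-H_\phi^{(m)}\|_{L^2(\T^2\times\T^2)}\to 0$. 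This telescoping is the same three-term pattern as in the proof of Lemma~\ref{lem-3-1}, only integrated in time. This delivers items (i) and (ii).

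For (iii), I would rerun the tightness/Skorokhod scheme with a Fourier-truncated initial datum drawn from the pushforward of $\mu^{a,b}$ under the Galerkin projection $\hat\Pi_{N_0}$: the truncated initial vorticity is $C^\infty$, so classical 2D Euler produces a unique global $L^\infty$-solution $\omega^{(n)}$ preserving its invariant law (itself absolutely continuous w.r.t.\ the truncated enstrophy measure with density bounded by $1/Z^{a,b}$ for $N_0$ large). A diagonal Skorokhod argument applied to the joint laws (point vortex, smooth, limit) then realizes all three approximations on a single probability space converging to the solution of~(i). The main obstacle throughout is the passage to the limit in the renormalized nonlinear term $\<\omega_s\otimes\omega_s,H_\phi\>$, which exists only as an $L^2(\mu)$-limit and not pathwise; the density bound $\d\mu^{a,b}/\d\mu\le 1/Z^{a,b}$ is precisely what allows the machinery of \cite[Theorem~10]{F1} to be invoked unchanged in the conditional setting.
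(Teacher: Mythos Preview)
Your treatment of parts (i) and (ii) is correct and coincides with the paper's argument: the density bound $\d\mu_N^{a,b}/\d\mu_N \le 1/Z_N^{a,b}$ together with $Z_N^{a,b}\to Z^{a,b}>0$ (Corollary~\ref{3-cor}) is precisely the single new ingredient the paper isolates (Lemma~\ref{lem-energy-solu}) to transfer the $L^2$ estimates of \cite{F1} to the conditional setting, after which tightness, Skorokhod, and passage to the limit in the nonlinear term proceed exactly as in \cite[Section~4.2]{F1}.

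Your proposal for (iii), however, has a genuine gap. You claim that the full 2D Euler flow started from $\hat\Pi_{N_0}\omega$, $\omega\sim\mu^{a,b}$, ``preserves its invariant law'' $(\hat\Pi_{N_0})_\#\mu^{a,b}$. This is false: the full Euler dynamics instantly creates Fourier modes outside $\hat\Lambda_{N_0}$, so the solution at any $t>0$ is no longer supported on the Galerkin subspace, and neither $(\hat\Pi_{N_0})_\#\mu$ nor its conditional version can be invariant. (The \emph{truncated} enstrophy measure is invariant only under the Galerkin-truncated Euler system, whose solutions are not solutions of the full 2D Euler equations, which is what (iii) requires.) Without stationarity you lose the uniform-in-time moment bounds that drive tightness: the only conserved quantities available are enstrophy and energy, and for $\hat\Pi_{N_0}\omega$ these have expectations of order $N_0^2$ and $\log N_0$ respectively, so no $n$-uniform control of $\|\omega^{(n)}(t)\|_{H^{-1-\delta}}$ is obtained. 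The paper instead refers to the end of \cite[Section~4.2]{F1}, where the $L^\infty$ approximations are built by smoothing the point vortex configurations already realized on the Skorokhod space in step (ii); this keeps the approximating objects tied to the point vortex dynamics, whose conditional law \emph{is} stationary, and the required uniform estimates and diagonal argument go through.
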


Recall that $\mu^{a,b}= (Z^{a,b})^{-1} {\bf 1}_{\{:\mathcal H: \in [a,b]\}} \mu$, it may seem that the above result follows from \cite[Theorem 25]{F1}. However, the initial density function in the present case is not continuous, thus our result is not a direct consequence of \cite[Theorem 25]{F1}. A careful investigation of the proof in \cite[Section 5]{F1} reveals that the continuity of density function was only used there to show that the normalizing constants $C_N$ tend to 1 as $N\to \infty$ (see the arguments below \cite[Lemma 29]{F1}). Since we have already shown in \eqref{eq-2} the convergence $Z_N^{a,b}\to Z^{a,b}$, we can follow the ideas in \cite{F1} to prove Theorem \ref{thm-energy-solu}.  In the sequel we present the main preparations which are needed in the proof.

Let $N\geq 2$ be fixed, we consider the point vortex dynamics on $\T^2$:
  \begin{equation}\label{vortex-dynamics}
  \frac{\d X^{i,N}_t}{\d t}= \frac1{\sqrt N} \sum_{j=1}^N a_j K\big(X^{i,N}_t -X^{j,N}_t \big), \quad i=1,\ldots, N
  \end{equation}
with the vortex intensities $(a_1,\ldots, a_N)\in (\R\setminus \{0\})^N$ and initial positions $\big(X^{1,N}_0, \ldots, X^{N,N}_0 \big)\in (\T^2)^N \setminus \Delta_N$, where $\Delta_N= \big\{ (x_1, \ldots, x_N)\in (\T^2)^N: \exists\, i\neq j \mbox{ such that } x_i= x_j\big\}$ is the generalized diagonal. It is well known that, for ${\rm Leb}_{\T^2}^{\otimes N}$-a.e. initial condition $\big(X^{1,N}_0, \ldots, X^{N,N}_0 \big)\in (\T^2)^N \setminus \Delta_N$, the above system of equations has a global solution, that is, the vortex points do not collapse, cf. \cite[Section 4.4]{MP}. Therefore, we can define the vorticity field
  $$\omega^N_t = \frac1{\sqrt N} \sum_{i=1}^N a_i \delta_{X^{i,N}_t},\quad t\geq 0,$$
which satisfies, for any $\phi\in C^\infty(\T^2)$,
  \begin{equation}\label{weak-vorticity-form}
  \big\<\omega^N_t, \phi\big\>= \big\<\omega^N_0, \phi\big\> + \int_0^t \big\<\omega^N_s\otimes \omega^N_s, H_\phi\big\>\,\d s, \quad \mbox{for all } t\geq 0.
  \end{equation}
We mention that an interesting model involving the creation and damping of point vortices is studied in the recent work \cite{Grotto}.

We shall consider the point vortex dynamics with random intensities and random initial conditions. Thus, on a probability space $(\Theta, \mathcal F, \P)$, let $\{\xi_i\}_{i\geq 1}$ be an i.i.d. sequence of random variables with the standard Gaussian distribution $N(0,1)$, and $\{X^i_0\}_{i\geq 1}$ an i.i.d. sequence of random variables with uniform distribution on $\T^2$; assume the two families are independent. Note that the measures $\lambda_N$ and $\mu_N$ defined in Section 1 are the laws of $\big((\xi_1, X^1_0), \ldots, (\xi_N, X^N_0)\big)$ and of $\omega^N_0= \frac1{\sqrt N} \sum_{i=1}^N \xi_i \delta_{X^{i}_0}$, respectively.

For the reader's convenience, we recall the following result which is taken from \cite[Proposition 22]{F1} (the last property is not mentioned there, but it is obvious since the Hamiltonian is invariant for point vortices).

\begin{proposition}\label{prop-energy}
Consider the point vortex dynamics \eqref{vortex-dynamics} with random intensities $(\xi_1, \ldots, \xi_N)$ and random initial positions $( X^1_0, \ldots, X^N_0)$ distributed as $\lambda_N$. $\P$-almost surely, the dynamics $\big( X^{1,N}_t, \ldots, X^{N,N}_t\big)$ is well defined in $(\T^2)^N \setminus \Delta_N$ for all $t\geq 0$, and $\big( \big(\xi_1, X^{1,N}_t\big), \ldots, \big(\xi_N, X^{N,N}_t\big) \big)$ has the invariant law $\lambda_N$. The associated measure-valued vorticity $\omega^N_t$ is a stochastic process with the invariant marginal law $\mu_N$; moreover, $\P$-a.s., $\omega^N_t$ satisfies \eqref{weak-vorticity-form} and
  \begin{equation}\label{prop-energy.1}
  \mathcal H_N \big(\omega^N_t \big)= \mathcal H_N\big( \big(\xi_1, X^{1,N}_t\big), \ldots, \big(\xi_N, X^{N,N}_t\big) \big) = \mathcal H_N \big(\omega^N_0 \big), \quad \mbox{for all } t\geq 0.
  \end{equation}
\end{proposition}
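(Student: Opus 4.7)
The plan is to reduce essentially all the claims to the deterministic theory of point vortex dynamics and to the Hamiltonian structure of \eqref{vortex-dynamics}; only the closing identity \eqref{prop-energy.1} requires a short separate remark.

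First I would handle global existence. For each fixed realization of the intensities $(\xi_1(\theta), \ldots, \xi_N(\theta)) \in (\R\setminus\{0\})^N$ (and the event that some $\xi_i$ vanishes has $\P$-probability zero), the system \eqref{vortex-dynamics} is a smooth ODE on $(\T^2)^N \setminus \Delta_N$. By the classical Dürr--Pulvirenti type result recalled in \cite[Section 4.4]{MP}, for $\Leb^{\otimes N}$-a.e. initial position the vortices never collapse and the flow exists globally. Since $\{X_0^i\}$ is independent of $\{\xi_i\}$ and uniformly distributed on $(\T^2)^N$, Fubini's theorem lifts this deterministic statement to the claim that the dynamics $\bigl(X^{1,N}_t,\ldots,X^{N,N}_t\bigr)$ is $\P$-a.s.\ defined in $(\T^2)^N \setminus \Delta_N$ for all $t \geq 0$.

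Next I would establish the invariance of $\lambda_N$. In the position variables the flow is Hamiltonian with Hamiltonian $\mathcal H_N$ and symplectic form $\sum_i \xi_i\,\d x_i^1 \wedge \d x_i^2$, so Liouville's theorem gives preservation of $\d x_1 \ldots \d x_N$ on $(\T^2)^N$, for every fixed choice of the constants $\xi_i$. Since the $\xi_i$ are themselves preserved along the flow, the joint distribution of $\bigl((\xi_1,X^{1,N}_t),\ldots,(\xi_N,X^{N,N}_t)\bigr)$ at any $t\geq 0$ equals the product $\mathcal N(\d\xi_1)\ldots\mathcal N(\d\xi_N)\,\d x_1 \ldots \d x_N = \lambda_N$. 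Pushing forward under $\mathcal T_N$ in \eqref{map} immediately gives that $\omega^N_t$ has marginal law $\mu_N$.

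The weak vorticity formulation \eqref{weak-vorticity-form} is then a direct computation: differentiating $\<\omega^N_t, \phi\> = \frac{1}{\sqrt N}\sum_i \xi_i \phi(X^{i,N}_t)$ and inserting the ODE gives
  $$\frac{\d}{\d t}\<\omega^N_t, \phi\>= \frac1N \sum_{i,j=1}^N \xi_i \xi_j \nabla\phi(X^{i,N}_t)\cdot K\bigl(X^{i,N}_t - X^{j,N}_t\bigr),$$
which, after symmetrizing in $(i,j)$ using the antisymmetry of $K$ and the convention $H_\phi(x,x)=0$, coincides with $\<\omega^N_s\otimes \omega^N_s, H_\phi\>$ (the diagonal terms $i=j$ drop out). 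Integrating in time yields \eqref{weak-vorticity-form}. Finally, \eqref{prop-energy.1} is the conservation of the Hamiltonian along its own flow: a direct computation of $\frac{\d}{\d t}\mathcal H_N$ along \eqref{vortex-dynamics} gives zero by the antisymmetry of $K$ and the symmetry of the double sum. I do not anticipate any genuine obstacle; the only care needed is the Fubini argument that promotes the deterministic a.e.\ statement of non-collapse to a $\P$-a.s.\ statement in the enlarged probability space carrying the intensities.
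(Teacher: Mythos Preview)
Your proof is correct and follows the standard route. The paper itself does not give a proof of this proposition: it simply recalls the result from \cite[Proposition~22]{F1} and remarks that the final identity \eqref{prop-energy.1} is obvious from the Hamiltonian invariance of point vortex dynamics, which is exactly what you verify.
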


Next we confine the point vortex dynamics to those initial configurations with energy belonging to the interval $[a,b]$. To this end, we introduce the conditional probability measures on $(\Theta, \mathcal F)$:
  $$\P_N^{a,b}= \frac{{\bf 1}_{\{\mathcal H_N(\omega^N_0)\in [a,b]\}}}{\P(\{ \mathcal H_N(\omega^N_0)\in [a,b] \})} \P = \frac{{\bf 1}_{ \{\mathcal H_N(\omega^N_0) \in [a,b]\}}}{Z_N^{a,b}} \P.$$
The measure $\lambda_N^{a,b}$ defined in \eqref{flat-cond-measure} is the distribution on $(\R\times \T^2)^N$ of $\big( \big(\xi_1, X^{1}_0\big), \ldots, \big(\xi_N, X^{N}_0\big) \big)$ under $\P_N^{a,b}$, and we have $\mu_N^{a,b} = (\mathcal T_N)_\# \lambda_N^{a,b}$, where $\mathcal T_N$ is defined in \eqref{map}. From Proposition \ref{prop-energy} we deduce the following result.

\begin{proposition}\label{prop-conditional-energy}
Consider the point vortex dynamics \eqref{vortex-dynamics} with random intensities $(\xi_1, \ldots, \xi_N)$ and random initial positions $( X^1_0, \ldots, X^N_0)$ distributed as $\lambda_N^{a,b}$. Then, $\P_N^{a,b}$-a.s., for all $t\geq 0$, the dynamics $\big( X^{1,N}_t, \ldots, X^{N,N}_t\big)$ is well defined in $(\T^2)^N \setminus \Delta_N$, and $\big( \big(\xi_1, X^{1,N}_t\big), \ldots, \big(\xi_N, X^{N,N}_t\big) \big)$ has the invariant distribution $\lambda_N^{a,b}$. The associated measure-valued vorticity $\omega^N_t$ is a stochastic process with the invariant marginal law $\mu_N^{a,b}$; moreover, $\P_N^{a,b}$-a.s., $\omega^N_t$ satisfies \eqref{weak-vorticity-form} and
  $$\mathcal H_N \big(\omega^N_t \big) = \mathcal H_N \big(\omega^N_0 \big)\in [a,b], \quad \mbox{for all } t\geq 0.$$
\end{proposition}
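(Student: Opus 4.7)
The plan is to deduce this statement directly from Proposition \ref{prop-energy} by combining two ingredients: the absolute continuity $\P_N^{a,b} \leq (Z_N^{a,b})^{-1}\,\P$, and the pathwise conservation law \eqref{prop-energy.1}. The absolute continuity is what transfers the pathwise properties: the density $\frac{\mathbf 1_{\{\mathcal H_N(\omega^N_0)\in[a,b]\}}}{Z_N^{a,b}}$ is bounded, so every $\P$-a.s.\ statement from Proposition \ref{prop-energy} becomes a $\P_N^{a,b}$-a.s.\ statement without modification. In particular, the dynamics $\big( X^{1,N}_t, \ldots, X^{N,N}_t\big)$ is globally well defined in $(\T^2)^N\setminus \Delta_N$, the weak vorticity formulation \eqref{weak-vorticity-form} holds, and $\mathcal H_N(\omega^N_t)=\mathcal H_N(\omega^N_0)$ for all $t\geq 0$, $\P_N^{a,b}$-a.s. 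Under $\P_N^{a,b}$ one has $\mathcal H_N(\omega^N_0)\in[a,b]$ by the very definition of the conditional measure, and therefore $\mathcal H_N(\omega^N_t)\in[a,b]$ for every $t\geq 0$.

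The only genuine verification needed is the invariance of $\lambda_N^{a,b}$ under the dynamics. I would fix $t\geq 0$ and a Borel set $A\subset (\R\times\T^2)^N$, and write $\Xi_t^N:=\big((\xi_1,X^{1,N}_t),\ldots,(\xi_N,X^{N,N}_t)\big)$. Using the conservation \eqref{prop-energy.1} in the second equality and the $\P$-invariance of $\lambda_N$ granted by Proposition \ref{prop-energy} in the third, one computes
\begin{align*}
\P_N^{a,b}\big(\Xi_t^N\in A\big)
&= \frac{1}{Z_N^{a,b}}\,\E\Big[\mathbf 1_A(\Xi_t^N)\, \mathbf 1_{\{\mathcal H_N(\omega^N_0)\in[a,b]\}}\Big]\\
&= \frac{1}{Z_N^{a,b}}\,\E\Big[\mathbf 1_A(\Xi_t^N)\, \mathbf 1_{\{\mathcal H_N(\omega^N_t)\in[a,b]\}}\Big]\\
&= \frac{1}{Z_N^{a,b}}\int_{(\R\times\T^2)^N}\!\! \mathbf 1_A(X_N)\,\mathbf 1_{\{\mathcal H_N(X_N)\in[a,b]\}}\,\d\lambda_N
 \;=\; \lambda_N^{a,b}(A).
\end{align*}
The invariance of the marginal law $\mu_N^{a,b}$ for the measure-valued process $\omega^N_t$ then follows by pushing forward under the map $\mathcal T_N$ defined in \eqref{map}, since by construction $\mu_N^{a,b}=(\mathcal T_N)_\#\lambda_N^{a,b}$.

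There is no real obstacle here: the proof is essentially the observation that the Hamiltonian is a first integral of the deterministic vortex flow, so conditioning on the event $\{\mathcal H_N\in[a,b]\}$ commutes with the flow, and hence both the $\P$-a.s.\ properties and the invariance of $\lambda_N$ survive the conditioning unchanged.
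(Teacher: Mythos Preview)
Your proof is correct and follows essentially the same approach as the paper's own argument: transfer the $\P$-a.s.\ properties via absolute continuity, then verify invariance of $\lambda_N^{a,b}$ by replacing $\mathcal H_N(\omega^N_0)$ with $\mathcal H_N(\omega^N_t)$ (using \eqref{prop-energy.1}) and applying the $\P$-invariance of $\lambda_N$, finally pushing forward through $\mathcal T_N$. The only cosmetic difference is that the paper tests against bounded measurable functions $F$ while you test against indicators of Borel sets, which is equivalent.
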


\begin{proof}
Since the conditional probability measure $\P_N^{a,b}$ is absolutely continuous with respect to $\P$, the properties that hold $\P$-a.s. also hold almost surely with respect to $\P_N^{a,b}$. It remains to show that $\lambda_N^{a,b}$ is the invariant distribution of $\big( \big(\xi_1, X^{1,N}_t\big), \ldots, \big(\xi_N, X^{N,N}_t\big) \big)$. Once this is proved, we deduce that $\omega^N_t$ has the invariant marginal law $\mu_N^{a,b}$ since
  $${\rm law}\big(\omega^N_t \big) = (\mathcal T_N)_\# {\rm law}\big(\big(\xi_1, X^{1,N}_t\big), \ldots, \big(\xi_N, X^{N,N}_t\big) \big) = (\mathcal T_N)_\# \lambda_N^{a,b} =\mu_N^{a,b}.$$

To simplify the notations, we write $\xi= (\xi_1, \ldots, \xi_N)$ and $X^N_t= \big(X^{1,N}_t, \ldots, X^{N,N}_t \big)$. For any bounded measurable function $F: (\R\times \T^2)^N\to \R$, by the definition of $\P_N^{a,b}$ and \eqref{prop-energy.1},
  $$\aligned \int_\Theta F\big(\xi, X^N_t\big)\,\d \P_N^{a,b} &= \frac1{Z_N^{a,b}} \int_\Theta F\big(\xi, X^N_t\big) {\bf 1}_{ \{\mathcal H_N(\omega^N_0) \in [a,b]\}} \,\d\P \\
  &= \frac1{Z_N^{a,b}} \int_\Theta F\big(\xi, X^N_t\big) {\bf 1}_{ \{\mathcal H_N (\xi, X^N_t) \in [a,b]\}} \,\d\P \\
  &= \frac1{Z_N^{a,b}} \int_\Theta F \big(\xi, X^N_0\big) {\bf 1}_{ \{\mathcal H_N (\xi, X^N_0) \in [a,b]\}} \,\d\P,
  \endaligned$$
where the last step is due to the fact that, under $\P$, $\big(\xi, X^N_t\big)$ has the invariant distribution $\lambda_N$. Therefore,
  \[\int_\Theta F\big(\xi, X^N_t\big)\,\d \P_N^{a,b} = \int_\Theta F \big(\xi, X^N_0\big) \,\d \P_N^{a,b},\]
which implies that, under the conditional probability measure $\P_N^{a,b}$, $\big(\xi, X^N_t\big)$ has the same law as $\big(\xi, X^N_0\big)$, i.e. $\lambda_N^{a,b}$.
\end{proof}

To emphasize the dependence on the parameters $a,b$, we denote by $\omega^N_{a,b}(\cdot)$ the associated measure-valued vorticity field obtained in Proposition \ref{prop-conditional-energy}. The next lemma gives useful estimates on $\omega^N_{a,b}(\cdot)$.

\begin{lemma}\label{lem-energy-solu}
Let $N_0$ be large enough such that $Z_0:= \inf_{N\geq N_0} Z_N^{a,b}>0$ and $f:\T^2\times \T^2\to \R$ be symmetric, bounded and measurable. Then for all $p\geq 1$ and $\delta>0$, there are constants $C_p, C_{p,\delta}>0$ such that for all $N\geq N_0$,
  $$\E_{\P_N^{a,b}} \big[\big\<\omega^N_{a,b}(t)\otimes \omega^N_{a,b}(t), f \big\>^p\big] \leq C_p \|f\|_\infty^p / Z_0, \quad \E_{\P_N^{a,b}} \big[\big\|\omega^N_{a,b}(t)\big\|_{H^{-1-\delta}}^p \big] \leq C_{p,\delta}/ Z_0.$$
Moreover, if $f(x,x)=0$ for all $x\in\T^2$, then
  $$\E_{\P_N^{a,b}} \big[\big\<\omega^N_{a,b}(t)\otimes \omega^N_{a,b}(t), f \big\>^2\big] \leq \frac2{Z_0} \int_{\T^2\times \T^2} f(x,y)^2\,\d x\d y.$$
\end{lemma}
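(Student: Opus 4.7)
The plan is to exploit the absolute continuity of $\P_N^{a,b}$ with respect to $\P$ in order to reduce all three inequalities to moment estimates for the \emph{unconditional} random point vortex $\omega_N = N^{-1/2}\sum_i \xi_i \delta_{x_i}$. Since
$$\P_N^{a,b} = \frac{{\bf 1}_{\{\mathcal H_N(\omega_0^N)\in[a,b]\}}}{Z_N^{a,b}}\,\P \qquad \text{and} \qquad Z_N^{a,b}\geq Z_0 \text{ for } N\geq N_0,$$
the Radon--Nikodym density of $\P_N^{a,b}$ w.r.t.\ $\P$ is bounded above by $1/Z_0$. Combined with Proposition \ref{prop-conditional-energy}, which says that the one-time marginal law of $\omega^N_{a,b}(t)$ under $\P_N^{a,b}$ is the time-independent measure $\mu_N^{a,b}$, this yields
$$\E_{\P_N^{a,b}}\big[\Phi\big(\omega^N_{a,b}(t)\big)\big] = \int \Phi \,\d \mu_N^{a,b} \leq \frac{1}{Z_0}\, \E_{\P}[\Phi(\omega_N)]$$
for every nonnegative Borel functional $\Phi$. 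The prefactor $1/Z_0$ is precisely the one on the right-hand side of the three displayed bounds, so it suffices to establish the corresponding $\P$-moment estimates for $\omega_N$ itself.

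The bound assuming $f(x,x)=0$ is then nothing but the identity \eqref{point-vortices} already recorded in the paper. For the first estimate, with bounded symmetric $f$ and arbitrary $p\geq 1$, I would condition on the positions $(x_1,\ldots,x_N)$: since
$$\<\omega_N\otimes\omega_N, f\> = \frac{1}{N}\sum_{i,j=1}^N \xi_i\xi_j f(x_i,x_j)$$
is, conditional on positions, an element of the second Wiener chaos over the i.i.d.\ Gaussians $\xi_i$, with conditional mean $N^{-1}\sum_i f(x_i,x_i)$ of absolute value at most $\|f\|_\infty$ and conditional variance bounded by $2\|f\|_\infty^2 + O(N^{-1}\|f\|_\infty^2)$ (coming respectively from the $i\neq j$ chaos contribution and from the diagonal $\xi_i^2-1$ terms), Gaussian hypercontractivity on the second chaos gives a conditional $L^p$-bound of order $C_p\|f\|_\infty$, and integrating over positions yields the stated inequality.

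For the Sobolev bound, I would expand $\|\omega_N\|_{H^{-1-\delta}}^2 = \sum_{k\in\Z_0^2}(1+|k|^2)^{-1-\delta}\<\omega_N, e_k\>^2$ in the real basis, observe that $\E_\P \<\omega_N,e_k\>^2 = 1$ for each $k$ (the quantity being an average of the i.i.d.\ centred terms $\xi_i e_k(x_i)$ of unit variance), whence $\E_\P\|\omega_N\|_{H^{-1-\delta}}^2 = \sum_k(1+|k|^2)^{-1-\delta}<\infty$ for $\delta>0$, and control higher moments by the same second-chaos hypercontractivity applied to the quadratic form in the $\xi_i$'s obtained after freezing positions. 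The main point to watch is that both the variance estimates and the hypercontractivity constants stay uniform in $N$, which the explicit computations above confirm; combining with the density bound $1/Z_0$ then delivers the three asserted inequalities.
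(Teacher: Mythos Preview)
Your approach is correct and mirrors the paper's: both arguments use the absolute continuity $\d\P_N^{a,b}/\d\P \leq 1/Z_N^{a,b}\leq 1/Z_0$ together with the invariance of $\mu_N^{a,b}$ to reduce to unconditional moment bounds for the random point vortex $\omega_N$ under $\P$. The paper then simply invokes \cite[Lemma~23]{F1} for those bounds, whereas you spell them out via conditioning on positions and Gaussian hypercontractivity on the second chaos in the $\xi_i$'s; this is presumably the same mechanism underlying the cited lemma, so the two proofs coincide in substance.
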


\begin{proof}
Note that
  $$\E_{\P_N^{a,b}} \big[\big\<\omega^N_{a,b}(t)\otimes \omega^N_{a,b}(t), f \big\>^p\big] = \frac1{Z_N^{a,b}} \E_{\P} \Big[\big\<\omega^N_t\otimes \omega^N_t, f \big\>^p {\bf 1}_{\{\mathcal H_N(\omega^N_0) \in [a,b]\}}\Big], $$
the first estimate follows from \cite[Lemma 23]{F1}. The proofs of the others are similar.
\end{proof}

With these preparations in hand, we can complete the proof of Theorem \ref{thm-energy-solu}. More precisely, let $Q^N$ be the law of the process $\big\{ \omega^N_{a,b}(t) \big\}_{t\in [0,T]}$ on $\mathcal X= C\big([0,T], H^{-1-}(\T^2) \big)$. Using the equation \eqref{weak-vorticity-form} and the estimates in Lemma \ref{lem-energy-solu}, we can show that the family $\{Q^N\}_{N\geq N_0}$ is tight on $\mathcal X$, see the beginning part of \cite[Section 4.2]{F1} for details. By Prohorov's theorem, there is a subsequence $\{N_k\}_{k\geq 1}$ such that $Q^{N_k}$ converges weakly to some probability measure $Q$ on $\mathcal X$. Skorokhod's theorem implies that there exist a probability space $\big( \tilde\Theta, \tilde{\mathcal F},\tilde\P\big)$ and processes $\tilde\omega^{N_k}$ and $\tilde\omega$, with trajectories in $\mathcal X$, such that their laws are $Q^{N_k}$ and $Q$, respectively; and $\tilde\P$-a.s., $\tilde\omega^{N_k}$ converges to $\tilde\omega$ in the topology of $\mathcal X$. Moreover, the processes $\tilde\omega^{N_k}$ can be represented as a sum of Dirac deltas:
  $$\tilde\omega^{N_k}_t= \frac1{\sqrt{N_k}} \sum_{i=1}^{N_k} \tilde\xi_i \delta_{\tilde X^{i,N_k}_t}, \quad t\in [0,T],$$
where $\big(\big(\tilde\xi_1, \tilde X^{i,N_k}_t\big),\ldots, \big(\tilde\xi_{N_k}, \tilde X^{N_k,N_k}_t\big)\big)$ is a random vector with the invariant law $\lambda_{N_k}^{a,b}$ for all $t\in[0, T]$, and it solves the point vortex dynamics \eqref{vortex-dynamics}, see \cite[Lemma 28]{F1} for the detailed proof.

Next we prove the law of $\tilde\omega_t$ is the energy conditional measure $\mu^{a,b}$ for all $t\in [0,T]$. For any $F\in C_b\big( H^{-1-}(\T^2) \big)$, since $\tilde\omega^{N_k}$ converges $\tilde\P$-a.s. to $\tilde\omega$ in the topology of $\mathcal X$,
  $$\aligned
  \int_{\tilde \Theta} F(\tilde \omega_t)\,\d\tilde\P &= \lim_{k\to \infty} \int_{\tilde \Theta} F\big(\tilde \omega^{N_k}_t \big) \,\d\tilde\P = \lim_{k\to \infty} \int_{H^{-1-}(\T^2)} \! F(\omega)\,\d \mu_{N_k}^{a,b}(\omega) = \int_{H^{-1-}(\T^2)} \! F(\omega)\,\d \mu^{a,b}(\omega),
  \endaligned$$
where in the last step we have used the weak convergence of $\mu_N^{a,b}$ to $\mu^{a,b}$ proved in Section 3.

Finally, using again the estimates in Lemma \ref{lem-energy-solu} and repeating the arguments below \cite[Lemma 28]{F1}, we can show that $\tilde\omega_\cdot$ satisfies the weak vorticity formulation \eqref{weak-vorticity} of the 2D Euler equation. Summarizing the above discussions, we have proved the first two assertions of Theorem \ref{thm-energy-solu}. The last assertion is proved in the same way as the end of \cite[Section 4.2]{F1}.

\section{Structures and intermediate regimes}

In classical Onsager theory the microcanonical measure is defined as the
uniform measure on configurations $\left(  x_{1},\ldots ,x_{N}\right)  $ such
that
  \begin{equation}\label{microcan Onsager}
  \sum_{i\neq j}\xi_{i} \xi_{j}\log\frac{1}{\left\vert x_{i}-x_{j} \right\vert }\sim N^{2} a
  \end{equation}
for some value of $a>0$ (in this section we heuristically write $a$ instead of $[a,b]$ since, for $a>0$, it is the value of $a$ which plays a practical role, independently of $b$). For typical configurations $\left(  x_{1}%
,\ldots ,x_{N}\right)  $, when $N$ is large, the empirical measure%
\[
\frac{1}{N}\sum_{i=1}^{N} \xi_{i} \delta_{x_{i}}%
\]
is close to the solutions of a certain mean field equation (Onsager theory).
There is a natural explanation, for $a\gg 0$:\ in order to have
(\ref{microcan Onsager}) we need roughly $N^{2}$ terms in the sum $\sum_{i\neq
j}$ with value $\xi_{i} \xi_{j}\log\frac{1}{\left\vert x_{i}%
-x_{j}\right\vert }$ close to $a$ (this argument is very rough). The ``only''
way to reach such result is to group positive vortices together, all very
close to each other, and similarly for the negative ones, with the two clusters
not so close to each other:\ roughly $\left(  N/2\right)  ^{2}$ terms will be
positive and close to $a$ (those corresponding to positive pairs), other
$\left(  N/2\right)  ^{2}$ terms will be positive as well and close to $a$
(those corresponding to negative pairs), and the remaining pairs, composed of
vortices of opposite signs, have small value of $\xi_{i} \xi_{j}\log
\frac{1}{\left\vert x_{i}-x_{j}\right\vert }$ because the two points belong to
clusters which are relatively far from each other.

In our ``white noise'' model, the microcanonical measure corresponds to the constraint
  \begin{equation}\label{WN-regime}
  \sum_{i\neq j} \xi_{i} \xi_{j}\log\frac{1}{\left\vert x_{i}-x_{j} \right\vert }\sim N a.
  \end{equation}
The typical configurations $\left(  x_{1},\ldots ,x_{N}\right)  $, for large $N$, have the renormalized empirical measure%
  \[ \frac{1}{\sqrt{N}}\sum_{i=1}^{N} \xi_{i}\delta_{x_{i}} \]
close to white noise conditioned to renormalized energy equal to $a$. In Figure 1 below, we show the histogram of the interaction energy of 200 point vortices (its features do not change by increasing the number of vortices). It shows the typical values of ``$a$'' in formula \eqref{WN-regime}. They are very small and the corresponding configurations are quite disordered, opposite to the structures of Onsager theory and coherently with the white noise limit. The theoretical energy spectrum of \textit{free} white noise ensemble (not constrained by the energy) decays as $k^{-1}$, opposite to the predicted decay $k^{-5/3}$ of inverse stationary 2D turbulence. The question then is the decay of the spectrum for the microcanonical ensemble, especially for large values of $a$, when we expect some degree of clustering of the vortices and then, potentially, the emergence of a more interesting spectrum.

\begin{figure}[htbp]
\centering
\includegraphics[width=15cm, height=7.5cm]{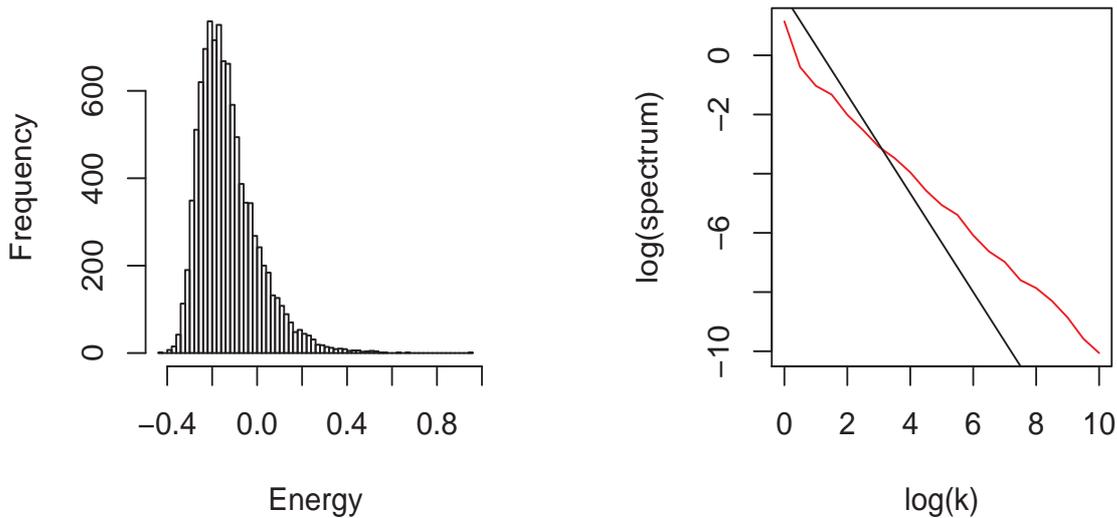}
\caption{Left: the histogram of 10,000 samples of the interaction energies of 200 uniformly distributed point vortices. Right: the curve is the spectrum function computed from the 11 samples with largest interaction energies, while the straight line has the slope $-5/3$.}
\label{fig-1}
\end{figure}

It is very difficult to compute theoretically the spectrum function of the microcanonical measure \eqref{microcanonical-meas}, thus we do some numerical simulations. We generate 10,000 samples of uniformly distributed point vortices; each sample consists of 200 vortices, in which half of them have intensity $1/\sqrt{200}$ and the rest $-1/\sqrt{200}$. We single out the 11 samples which have the largest interaction energies ($a= 0.51$ in this case), and compute their average spectrum function. The results are shown in Figure \ref{fig-1}. It shows that the slope of the spectrum is still very close to $-1$ like in the case of the free ensemble, far from $-5/3$.

Deviations of the spectrum slope from the flat value $-1$ are due to clusterization of point vortex configuration. To prove numerically this claim we proceed as follows: we produce artificially an initial condition with small clusters and then let it evolve by point vortex dynamics. We do not have a theorem of convergence to equilibrium but hope that after some time the configuration is more typical for the microcanonical ensemble. Precisely, we generate a point vortex configuration which, apart from some uniformly distributed point vortices, contains small clusters with 2, 4 and 8 vortices (these numbers are chosen for convenience). The clusters have uniformly distributed centers and their diameters are of the order 0.01. To get a smoother spectrum function, we produce 10 such samples (with average energy 1.364966) and compute the averaged spectrum function, which is shown by the thin line on the right of Figure \ref{fig-2}. We see that it is close to the line with slope $-5/3$ in a certain range of $\log(k)$. We take these special configurations as initial conditions and run the dynamics (use the Heun algorithm), with a small time step $h=0.0001$. In Figure \ref{fig-2}, we show the vortex distribution of one of the samples after 120,000 steps of evolution: $+$ and $\circ$ represent vortices of positive and negative intensity, respectively. The graph of the final spectrum function is shown by the dashed line on the right of Figure \ref{fig-2}, which, on the range $\log(k)\in [1,3]$, has the approximative slope $-1.775$. Opposite to the cases considered in Figure \ref{fig-1}, here we find a slope considerably different from $-1$ and in the direction of $-5/3$.

\begin{figure}[htbp]
\centering
\includegraphics[width=15cm, height=7.5cm]{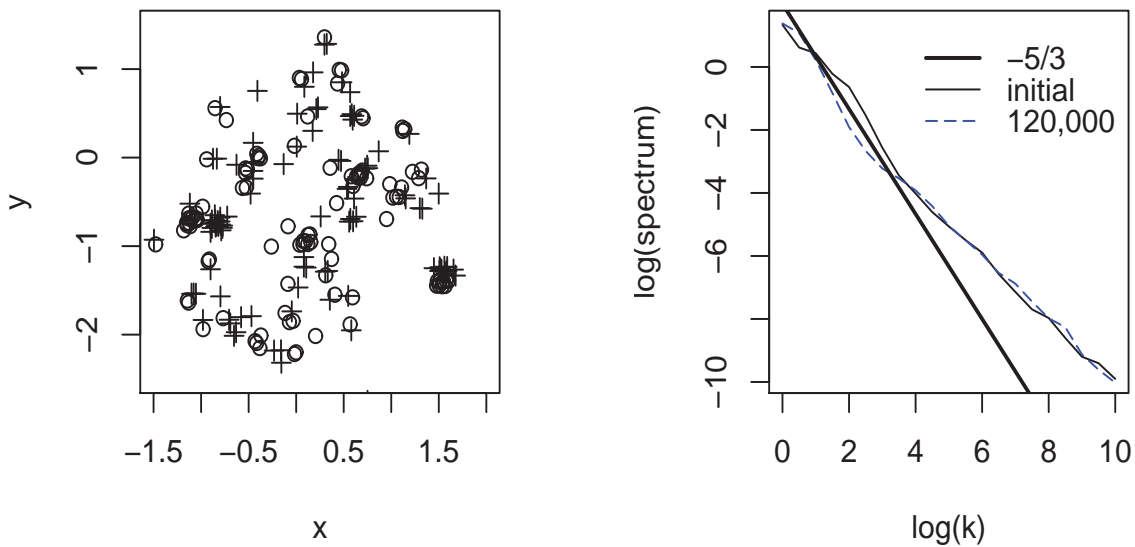}
\caption{Left: the configuration of (major part of) vortices after 120,000 steps of evolution. Right: spectrum functions before and after running dynamics.}
\label{fig-2}
\end{figure}

The question then is how to obtain spontaneously some degree of local clusterization, from an invariant measure and in particular from a microcanonical ensemble. Compared to turbulence, it seems that the two regimes of Onsager and conditional white noise are two ``extremes''. Turbulence is in the middle: typical configurations are not so uniformly distributed as in the white noise case, they have locally a great degree of clustering. But only locally, at small scale, not globally as the two big clusters of the Onsager case. Thus in the turbulence regime we expect that each vortex interacts neither with all those of the same sign (as in (\ref{microcan Onsager}))\ nor only with very few of the same sign (as in \eqref{WN-regime}), but with an intermediate amount.

A natural microcanonical condition is therefore
  \[\sum_{i\neq j} \xi_{i} \xi_{j}\log\frac{1}{\left\vert x_{i}-x_{j} \right\vert }\sim c(N) e \]
for some
  \[ N\ll c(N)\ll N^2. \]
The mathematical question then is whether it is possible to study the limit as $N\rightarrow \infty$ of this intermediate regime. For finite $N$ the microcanonical measure with normalizing constant $c\left(  N\right)  $  is invariant for Euler dynamics, but we do not know a corresponding invariant measure obtained as $N\rightarrow\infty$.  We leave this question open but hope the clarifications of this work help to address the question.

\bigskip

\noindent \textbf{Acknowledgements.} The second author is grateful to the financial supports of the grant ``Stochastic models with spatial structure'' at the Scuola Normale Superiore di Pisa, the National Natural Science Foundation of China (Nos. 11571347, 11688101), and the Youth Innovation Promotion Association, CAS (2017003).

\end{document}